\def\showauthornotes{1}
\def\showkeys{0}
\def\showdraftbox{0}
\def\confversion{0}
\definecolor{darkred}{rgb}{0.5,0,0}
\definecolor{darkgreen}{rgb}{0,0.35,0}
\definecolor{darkblue}{rgb}{0,0,0.55}
\newcommand{\Authornote}[3]{{\sf\small\color{#3}{[#1: #2]}}}
\newcommand{\Authorcomment}[2]{{\sf \small\color{gray}{[#1: #2]}}}
\newcommand{\Authorfnote}[2]{\footnote{\color{red}{#1: #2}}}
\newcommand{\Authornote}[3]{}
\newcommand{\Authorcomment}[2]{}
\newcommand{\Authorfnote}[2]{}
\newcommand{\draftbox}{\begin{center}
  \fbox{%
    \begin{minipage}{2in}%
      \begin{center}%
        \begin{Large}%
          \textsc{Working Draft}%
        \end{Large}\\
        Please do not distribute%
      \end{center}%
    \end{minipage}%
  }%
\end{center}
\vspace{0.2cm}}
\newcommand{\draftbox}{}
\newtheorem{theorem}{Theorem}[section]
\newtheorem{definition}[theorem]{Definition}
\newtheorem{lemma}[theorem]{Lemma}
\newtheorem{corollary}[theorem]{Corollary}
\newtheorem{claim}[theorem]{Claim}
\newtheorem{algo}[theorem]{Algorithm}
\def\FullBox{\hbox{\vrule width 6pt height 6pt depth 0pt}}
\def\qed{\ifmmode\qquad\FullBox\else{\unskip\nobreak\hfil
\penalty50\hskip1em\null\nobreak\hfil\FullBox
\parfillskip=0pt\finalhyphendemerits=0\endgraf}\fi}
\def\qedsketch{\ifmmode\Box\else{\unskip\nobreak\hfil
\penalty50\hskip1em\null\nobreak\hfil$\Box$
\parfillskip=0pt\finalhyphendemerits=0\endgraf}\fi}
\def\eps{\varepsilon}
\def\epsilon{\varepsilon}
\def\eps{\epsilon}
\def\phi{\varphi}
\def\cal{\mathcal}
\newcommand{\sub}{\ensuremath{\subseteq}}
\newcommand{\defeq}{:=}
\renewcommand{\bar}{\overline} 
\newcommand{\given}{\;\ifnum\currentgrouptype=16 \middle\fi \vert\;}
\newcommand{\mper}{\,.}
\newcommand{\F}{{\mathbb F}}
\newcommand{\abs}[1]{\ensuremath{\left\lvert #1 \right\rvert}}
\newcommand{\one}{{\mathbf{1}}}
\newcommand{\Esymb}{\mathbb{E}}
\def\condPE#1#2{%
	\@ifnextchar\bgroup
	{\ConditionalProbabilityRender{\widetilde{\Esymb}}{#1}{#2}}
	{\ProbabilityRender{\widetilde{\Esymb}}{#1 \given #2}}
}
\def\ConditionalProbabilityRender#1#2#3#4{
	\renderwithdist{#1}{#2}{#3 \given #4}	
}
\def\ProbabilityRender#1#2{
  \@ifnextchar\bgroup%
  {\renderwithdist{#1}{#2}}
   {\singlervrender{#1}{#2}}
}
\def\singlervrender#1#2{%
   \ensuremath{\mathchoice
       {{#1}\left[ #2 \right]}
       {{#1}[ #2 ]}
       {{#1}[ #2 ]}
       {{#1}[ #2 ]}
   }
}
\def\renderwithdist#1#2#3{%
   \@ifnextchar\bgroup
   {\superfancyrender{#1}{#2}{#3}}
   {\ensuremath{\mathchoice
      {\underset{#2}{#1}\left[ #3 \right]}
      {{#1}_{#2}[ #3 ]}
      {{#1}_{#2}[ #3 ]}
      {{#1}_{#2}[ #3 ]}
     }
   }
}
\def\superfancyrender#1#2#3#4#5{
   \ensuremath{\mathchoice
      {\underset{#1}{{#1}}\left#4 #3 \right#5}
      {{#1}_{#2}#4 #3 #5}
      {{#1}_{#2}#4 #3 #5}
      {{#1}_{#2}#4 #3 #5}
   }
}
\newfont{\inhead}{eufm10 scaled\magstep1}
\newcommand{\calC}{{\cal C}}
\newcommand{\calH}{{\cal H}}
\newcommand{\calL}{{\cal L}}
\newcommand{\calO}{{\cal O}}
\newcommand{\poly}{{\mathrm{poly}}}
\newcommand{\suchthat}{{\;\; : \;\;}}
\DeclareMathOperator{\rank}{\operatorname {rk}}
\renewcommand{\bar}[1]{\ensuremath{\overline{#1}}}
\newcommand{\inparen}[1]{\left(#1\right)}             
\newcommand{\inbraces}[1]{\left\{#1\right\}}           
\newcommand{\insquare}[1]{\left[#1\right]}             
\DeclareSymbolFont{extraup}{U}{zavm}{m}{n}
\DeclareMathSymbol{\varheart}{\mathalpha}{extraup}{86}
\DeclareMathSymbol{\vardiamond}{\mathalpha}{extraup}{87}
\def\one{\mathbf 1}
\def\frs{{\calC^{\mathrm{FRS}}}}
\def\efrs{{\mathrm{Enc}^{\mathrm{FRS}}}}
\newcommand{\agr}{{\mathrm{agr}}}
\title{Improved List Size for Folded Reed-Solomon Codes}
\author{Shashank Srivastava\thanks{{\tt DIMACS, Rutgers University and Institute for Advanced Study}. {\tt shashank.srivastava@rutgers.edu}. Work done in part at TTIC and at DIMACS, and supported in part by the NSF grant CCF-2326685 and by DIMACS.}}
\begin{document}
\maketitle

\draftbox

\begin{abstract}
Folded Reed-Solomon (FRS) codes are variants of Reed-Solomon codes, known for their optimal list decoding radius. We show explicit FRS codes with rate $R$ that can be list decoded up to radius $1-R-\epsilon$ with lists of size $\mathcal{O}(1/ \epsilon^2)$. This improves the best known list size among explicit list decoding capacity achieving codes.

We also show a more general result that for any $k\geq 1$, there are explicit FRS codes with rate $R$ and distance $1-R$ that can be list decoded arbitrarily close to radius $\frac{k}{k+1}(1-R)$ with lists of size $(k-1)^2+1$.

Our results are based on a new and simple combinatorial viewpoint of the intersections between Hamming balls and affine subspaces that recovers previously known parameters. We then use folded Wronskian determinants to carry out an inductive proof that yields sharper bounds.
\end{abstract}

\maketitle

\section{Introduction}
Error-correcting codes are objects designed to withstand corruptions that may appear when communicating through a noisy channel. A code $\calC$ of blocklength $n$ over alphabet $\Sigma$ is simply a subset of $\Sigma^n$, and the elements of $\calC$ are called codewords. The code is said to have distance $\Delta$ if the Hamming distance between any two distinct codewords in $\calC$ is at least $\Delta \cdot n$.

A code with distance $\Delta$ has the potential of correcting $\frac{\Delta n}{2}$ errors, as a corrupted codeword with fewer than $\frac{\Delta}{2}$ fraction of errors may be mapped back to the original codeword it was derived from. Another important parameter of a code is its rate, defined as $R = \frac{\log_{|\Sigma|} |\calC|}{n}$. The Singleton bound says that for any code, $\Delta \leq 1-R$, and Reed-Solomon codes are a well-studied family of codes that achieve this tradeoff.

Therefore, for rate $R$ codes, the best fraction of errors one may correct by uniquely mapping a corrupted codeword to a true codeword is at most $\frac{1-R}{2}$. List decoding is a relaxation of unique decoding, where a corrupted codeword can be mapped to a small \emph{list of codewords}. It is known that there exist codes of rate $R$ that can be list decoded up to radius $1-R-\eps$ for arbitrarily small $\eps$, and the list size needed to do so is $\calO(1/\eps)$. 

We say that a family of codes achieves list decoding capacity, if every code in the family with rate $R$ and blocklength $n$ has the property that all Hamming balls of radius $(1-R-\eps)n$ contain at most $\poly(n)$ codewords. Rephrasing the above, we know that there exist codes that achieve list decoding capacity with list size independent of $n$. Originally, such codes were shown to exist using the probabilistic method, and were not explicit.

\subsection{Folded Reed-Solomon Codes}
%
The first explicit construction of codes achieving list decoding capacity is due to Guruswami and Rudra \cite{GR08}. These codes are based on \emph{folding} the usual Reed-Solomon code. Folding is a simple operation where for each string, $m$ different alphabet symbols in $[q]$ are treated single symbol of $[q^m]$. Therefore, folding transforms a string of $[q]^n$ into a string of $[q^m]^{n/m}$. 

It is not difficult to see that folding preserves the rate, and the distance cannot decrease. If we fold a code on the Singleton bound such as a Reed-Solomon code, then the distance must also be preserved. The main advantage of folding Reed-Solomon (RS) codes is that the list decoding radius improves to beyond what is currently known for Reed-Solomon codes. This list decoding radius can be made larger than $1-R-\eps$ for any $\eps >0$ by choosing $m = \frac{1}{\eps^2}$. This was shown by Guruswami and Rudra \cite{GR08}, building on the work on Parvaresh and Vardy \cite{PV05}. The proof and algorithm were both simplified by Guruswami in \cite{Gur11} using observations of Vadhan \cite{Vadhan12}.

However, it is important that this folding for RS codes is done along a specific algebraic structure. Let $\F_q$ be a field, and $\gamma \in \F_q^*$ be a primitive element of $\F_q$, so that every non-zero element of $\F_q$ can be written as $\gamma^i$ for some integer $i\geq 0$. The Reed-Solomon code of blocklength $n$ and rate $R$ has codewords in one-to-one correspondence with polynomials of degree $<Rn$ in $\F_q[X]$. The codewords for a (full-length) Reed-Solomon code are based evaluations of these polynomials on $\F_q^*$, as given by the encoding map
\[
	f(X) \rightarrow \insquare{ f(1), f(\gamma), \cdots ,f(\gamma^{n-1}) } \in \F_q^n
\]
where $n=q-1$.
The folded Reed-Solomon code also has its codewords in one-to-one correspondence with the same polynomials, but the encoding map changes to
\[
	f(X) \rightarrow \insquare{ \begin{pmatrix}
	f(1) \\
	f(\gamma)\\
	\vdots \\
	f(\gamma^{m-1})
	\end{pmatrix}, 
	\begin{pmatrix}
	f(\gamma^m) \\
	f(\gamma^{m+1})\\
	\vdots \\
	f(\gamma^{2m-1})
	\end{pmatrix},
	\cdots ,\begin{pmatrix}
	f(\gamma^{n-m}) \\
	f(\gamma^{n-m+1})\\
	\vdots \\
	f(\gamma^{n-1})
	\end{pmatrix} } \in (\F_q^m)^{n/m}
\]

The main result from \cite{Gur11} says that if $m=1/\eps^2$, then for any $g\in (\F_q^m)^{n/m}$, the list $\calL(g,1-R-\eps)$ is contained in an affine subspace of dimension at most $\calO(1/\eps)$. This immediately gives an upper bound of $n^{\calO(1/\eps)}$ for the list size, proving that folded RS codes combinatorially achieve list decoding capacity. \cite{Gur11} also showed that a basis for the affine subspace can be found in $\calO(n^2)$ time, and so this list decoding can be done efficiently.

\subsection{Further Improvements}
Note that the bounds on both list size and alphabet size are large polynomials in $n$. \cite{GR08} showed in their original paper on capacity achieving codes that the Alon-Edmonds-Luby (AEL) distance amplification can also be used for alphabet size reduction to a constant independent of $n$ (but dependent on $\eps$). For list size improvement, \cite{Gur11} isolated a pseudorandom object called \emph{subspace evasive sets} such that no affine subspace of small dimension can intersect with a subspace evasive set in more than $\calO_{\eps}(1)$ points. Thus, if the message polynomials for folded RS codes were chosen from such a set, the lists would be of size at most $\calO_{\eps}(1)$. Moreover, \cite{Gur11} showed the existence of such subspace evasive sets with large enough size so that the loss in rate due to pre-encoding is negligible.

Explicit subspace evasive sets were then constructed by Dvir and Lovett \cite{DL12}, giving codes decodable up to $1-R-\eps$ with list size $(1/\eps)^{\calO(1/\eps)}$. There have also been attempts to use algebraic-geometric (AG) codes \cite{Gur09, GX12, GX22}, variants of subspace evasive-ness \cite{GX13, GK16, GRZ21}, and tensoring \cite{HRW20, KRRSS21} to reduce the alphabet size, list size and/or decoding time.

Somewhat surprisingly, it was shown by Kopparty, Ron-Zewi, Saraf and Wootters \cite{KRSW23} that folded RS codes themselves, without any modification, have much smaller list sizes than previously thought. They proved an upper bound of $(1/\eps)^{\calO(1/\eps)}$ using a general theorem on the intersection of Hamming balls and affine subspaces, matching the list size obtained by \cite{DL12} using subspace evasive sets. Their proof was recently simplified by Tamo \cite{Tamo24}, and was based on earlier ideas on subspace designs from \cite{GK16}.

\section{Our Results}
We extend the above line of work to improve the list size of folded RS codes to $\calO(1/\eps^2)$ for decoding up to radius $1-R-\eps$. To the best of our knowledge, this is the best known list size among explicit capacity achieving codes. This brings the list size of folded RS codes significantly closer to the best possible list size $\Omega(1/\eps)$ when decoding up to $1-R-\eps$.

First, we give an elementary proof that generalizes the results of \cite{KRSW23, Tamo24}. This is again based on upper bounds on the intersection of Hamming balls and affine subspaces, and gives the same asymptotic bound of $(1/\eps)^{\calO(1/\eps)}$ that was known before. 

For a code $\calC$ of blocklength $n$ and alphabet $\Sigma$, and $g\in \Sigma^n$, we use $\calL(g,\eta)$ to denote the list of codewords in $\calC$ at distance less than $\eta$ from $g$.
\begin{theorem}\label{thm:gen_lin}
	Let $\calC$ be a linear code of distance $\Delta$ and blocklength $n$ over alphabet $\F_q$, and let $\calH \sub \calC$ be an affine subspace of dimension $d$. Then, for any $g\in \F_q^n$,
	\[
		\abs{\calH \cap \calL \inparen{g,\frac{k}{k+1} \Delta} } \leq k(k+1)^{d-1}.
	\]
\end{theorem}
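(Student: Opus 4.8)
The plan is an induction on the dimension $d$, with the inductive step carried by a double-counting argument over coordinates. Write $\calH = c_0 + V$ where $c_0 \in \calH$ and $V \le \F_q^n$ is a subspace of dimension $d$; since $c_0$ and $c_0 + v$ both lie in $\calH \subseteq \calC$, every nonzero $v \in V$ is a difference of two distinct codewords and so has Hamming weight at least $\Delta$. Set $\rho := \tfrac{k}{k+1}\Delta$ and $L := \calH \cap \calL(g,\rho)$, and assume $L \neq \emptyset$ (otherwise there is nothing to prove). By definition, every $c \in L$ agrees with $g$ on strictly more than $n - \rho$ coordinates.

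I would first classify the coordinates $[n]$ by the behaviour of $\calH$: let $S_0$ be the coordinates on which every element of $\calH$ equals $g$, let $S_1$ be the coordinates on which $\calH$ is constant but disagrees with $g$, and let $S_g := [n] \setminus (S_0 \cup S_1)$ be the coordinates on which $\calH$ is non-constant. The \emph{key observation} is that $|S_g| \ge \Delta$: any nonzero $v \in V$ has its support inside $S_g$ (a coordinate where some element of $V$ is nonzero is automatically non-constant on $\calH$), and such a $v$ has weight $\ge \Delta$. Moreover, for $j \in S_g$ the linear form $v \mapsto v_j$ is nonzero on $V$, so the slice $\calH_j := \{c \in \calH : c_j = g_j\}$ is an affine subspace of $\calC$ of dimension exactly $d - 1$; hence, by the induction hypothesis applied with the same $g$, we have $|\calH_j \cap \calL(g,\rho)| \le k(k+1)^{d-2}$, and in particular $|\{c \in L : c_j = g_j\}| \le k(k+1)^{d-2}$.

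Now I would double-count the pairs $(c, j)$ with $c \in L$ and $c_j = g_j$. Summing over $c$, this count exceeds $|L|\,(n - \rho)$. Summing over $j$, a coordinate in $S_0$ contributes $|L|$, a coordinate in $S_1$ contributes $0$, and a coordinate in $S_g$ contributes at most $k(k+1)^{d-2}$, so the count is at most $|L|\,|S_0| + |S_g|\,k(k+1)^{d-2}$. Using $n - |S_0| = |S_1| + |S_g| \ge |S_g| \ge \Delta > \rho$ and that $x \mapsto x/(x-\rho)$ is decreasing, the two bounds combine to
\[
|L| \;<\; \frac{|S_g|}{|S_g| - \rho}\, k(k+1)^{d-2} \;\le\; \frac{\Delta}{\Delta - \rho}\, k(k+1)^{d-2} \;=\; (k+1)\, k(k+1)^{d-2} \;=\; k(k+1)^{d-1},
\]
as desired. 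The base case $d = 1$ runs identically, except that each slice $\calH_j$ with $j \in S_g$ is a single point, so a non-constant coordinate contributes at most $1$ rather than $k(k+1)^{d-2}$; the same computation then gives $|L| < k+1$, \ie $|L| \le k = k(k+1)^{0}$.

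I do not anticipate a genuine obstacle, since the argument is short; the one place that needs care is precisely the base case. The formula $k(k+1)^{d-1}$ is meaningful only for $d \ge 1$ — for $d = 0$ a single codeword inside the ball already exceeds it — so the induction must bottom out at $d = 1$, with the trivial bound ``a zero-dimensional affine subspace has at most one element'' playing the role of the inductive hypothesis there. Conceptually, the whole proof rests on the coordinate trichotomy together with $|S_g| \ge \Delta$: this is exactly what forces $\Delta/(\Delta - \rho) = k+1$ and produces the geometric growth of the list-size bound in $d$.
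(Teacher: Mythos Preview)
Your proof is correct and is essentially the paper's own argument: induction on $d$, with the inductive step carried by double-counting agreement pairs and the key input $|S_g|\ge \Delta n$. The only cosmetic differences are that the paper restricts the count to the non-constant coordinates $S=S_g$ from the outset (so the lower bound reads $|S_h|>|S|/(k+1)$ directly and $S_0,S_1$ never appear), and that the paper uses fractional distance while you implicitly treat $\Delta$ and $\rho$ as absolute; neither changes the substance.
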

For $m$-folded Reed-Solomon codes, it is known that every list of codewords in a ball of radius $\frac{k}{k+1}\cdot \inparen{1-\frac{m}{m-k+1}R}$ is contained in an affine subspace of dimension $k-1$. Thus, such a list will be of size at most $k(k+1)^{k-2}$.

We show that for the specific case of folded RS codes, this analysis can be significantly tightened.

\begin{theorem}\label{thm:folded_rs}
For $m$-folded Reed Solomon codes, and any integer $k \in [m]$,
\[
	\abs{\calL \inparen{g,\frac{k}{k+1} \cdot \inparen{1-\frac{m}{m-k+1}R}}} \leq (k-1)^2 + 1
\]
where $g$ is an arbitrary element of $(\F_q^m)^{n/m}$.
\end{theorem}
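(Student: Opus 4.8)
The statement to prove is that for $m$-folded Reed–Solomon codes, the list $\calL(g, \frac{k}{k+1}(1 - \frac{m}{m-k+1}R))$ has size at most $(k-1)^2+1$, which beats the generic bound $k(k+1)^{k-2}$ obtained by combining Theorem~\ref{thm:gen_lin} with the fact that such lists sit inside an affine subspace of dimension $k-1$. The key point is that for FRS codes we have far more structure than a generic affine subspace: the subspace containing the list is not arbitrary but is the solution space of a linear system whose coefficients are tied to the received word $g$ and whose "certificate" polynomial comes from a linear-algebraic (interpolation) argument. I would start by recalling the standard linear-algebraic list decoding setup for FRS codes: one finds a nonzero multivariate polynomial $Q(X, Y_1, \dots, Y_{k-1})$, linear in the $Y_i$'s, i.e. $Q = A_0(X) + A_1(X) Y_1 + \cdots + A_{k-1}(X) Y_{k-1}$, of controlled degrees, vanishing on the folded data; every codeword $f$ in the list satisfies the functional equation $A_0(X) + A_1(X) f(X) + A_2(X) f(\gamma X) + \cdots + A_{k-1}(X) f(\gamma^{k-2} X) = 0$. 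The solution set of this equation, over polynomials $f$ of degree $<Rn$, is an affine subspace of dimension $\le k-1$.

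**The core new ingredient.** The improvement should come from analyzing the intersection of the Hamming ball with this specific affine solution space via \emph{folded Wronskian determinants}, exactly as the abstract advertises. Given two list elements $f_1, f_2$ in the solution space, their difference lies in the associated \emph{linear} space $V$ of solutions to $A_1(X) h(X) + A_2(X) h(\gamma X) + \cdots + A_{k-1}(X) h(\gamma^{k-2}X) = 0$. The plan is: (i) show $\dim V \le k-2$ essentially for free (one fewer than $k-1$, from the homogeneity and a degree/leading-coefficient argument on $A_1$, or by the folded Wronskian of a hypothetical $(k-1)$-dimensional solution space being forced to vanish identically, contradicting that $\gamma$ has large multiplicative order); (ii) more importantly, set up an inductive argument on $k$: restrict attention to the coordinates where two particular codewords agree, observe that agreement of many FRS symbols forces the difference polynomial to have many roots (counted with folded multiplicity), and feed this into the folded Wronskian to get a recursive bound $N(k) \le$ (something like) $N(k-1) + (k-1) + \text{lower order}$, which telescopes to $(k-1)^2 + 1$. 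Concretely I expect the recursion to be driven by: among any list, pick the "closest" codeword to $g$; every other codeword agrees with $g$ on a large set, hence agrees with the closest codeword on a large set, so the pairwise differences are nonzero polynomials each vanishing (in the folded sense) at $\Omega(\text{many})$ points; the folded Wronskian of any $k-1$ such differences is a nonzero polynomial (using that $\gamma$ is primitive, so the folded Wronskian criterion for linear independence applies) of bounded degree, and counting its folded roots against the agreement sets caps the number of linearly independent differences, while a separate combinatorial/packing argument over the affine structure caps the total count.

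**Where the real work is.** The main obstacle, I expect, is getting the folded Wronskian degree bookkeeping tight enough to land on $(k-1)^2+1$ rather than something like $k^2$ or $(k-1)^2 + O(k)$ — the constant and lower-order terms matter here since the whole point is a clean quadratic. This requires carefully choosing the degrees of the interpolation polynomial $Q$ (balancing the degree of $A_0$ against the $A_i$'s so the functional equation is both solvable and forces the right codeword count), and then tracking how the folded multiplicity of a root of $h$ interacts with the folded Wronskian $\mathrm{Wr}_\gamma(h_1, \dots, h_{k-1})$ — specifically that a common folded root of all $h_i$ of multiplicity $r$ contributes multiplicity roughly $r - (k-2)$ (Wronskian-type multiplicity drop) to the Wronskian, so that $\sum (\text{agreement}) - \binom{k-1}{2}\cdot(\text{slack}) \le \deg \mathrm{Wr}_\gamma$. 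The secondary obstacle is handling the affine (as opposed to linear) structure cleanly: the list lives in a coset, so I would fix a base codeword, reduce to the linear space of differences, and make sure the "+1" in $(k-1)^2+1$ is precisely the base codeword plus the $(k-1)^2$ bound on the remaining ones coming from the linear analysis. I would also need the standard hypothesis that $q$ (equivalently the order of $\gamma$) is large enough that no spurious Wronskian vanishing occurs, which is automatic in the regime where FRS codes are defined.
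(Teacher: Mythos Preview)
Your proposal has the right raw ingredients---containment of the list in a $(k-1)$-dimensional affine subspace (Theorem~\ref{thm:lin_alg_rs}) and folded Wronskians---but the way you plan to assemble them has a real structural gap, and as written I do not see it reaching $(k-1)^2+1$.

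First, the induction variable. You propose to induct on $k$ with a recursion of the rough form $N(k)\le N(k-1)+(k-1)$; even formally this telescopes to $1+\binom{k}{2}$, not $(k-1)^2+1$, and more importantly there is no mechanism in your sketch that produces a smaller-$k$ instance from a larger-$k$ one. The paper instead fixes $k$ and inducts on the \emph{dimension $d$ of the affine subspace} $\calH$, proving the stronger statement
\[
\Bigl|\calH\cap\calL\Bigl(g,\tfrac{k}{k+1}\bigl(1-\tfrac{m}{m-k+1}R\bigr)\Bigr)\Bigr|\le (k-1)d+1\qquad\text{for all }d<k,
\]
and then sets $d=k-1$ at the end.

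Second, the role of the Wronskian. You want to take pairwise differences from a fixed ``closest'' codeword and count common folded roots of their Wronskian. But the differences need not share agreement positions, so there is no pool of common roots to count; this line stalls. In the paper the Wronskian is applied to a basis $h_1,\dots,h_d$ of the \emph{linear part} of $\calH$ itself, and its purpose is purely to control, for each folded coordinate $i$, the rank $r_i$ of the $m\times d$ constraint matrix $A_i$ that ``agreeing with $g$ at position $i$'' imposes on $\calH$. The outcome is the Guruswami--Kopparty global bound $\sum_{i=1}^N (d-r_i)\le \frac{dRn}{m-d+1}$ on total rank deficit; this is the statement you are missing.

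Third, the counting step. The paper does \emph{not} anchor at a closest codeword. It double-counts $\sum_{h\in\calH_g}|S_h|$: the lower bound is immediate from membership in the list, and the upper bound comes from observing that the codewords agreeing with $g$ at coordinate $i$ lie in a sub-affine-space of dimension $d-r_i$, so by the inductive hypothesis there are at most $(k-1)(d-r_i)+1$ of them. Summing over $i$ and plugging in the rank-deficit bound gives $|\calH_g|\le (k-1)d+1$ after a short calculation. Your ``+1 is the base codeword'' heuristic is not how the $+1$ arises; it falls out of the double count.

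In short: drop the closest-codeword/pairwise-difference route and the induction on $k$; instead induct on $\dim\calH$, use the Wronskian only to bound $\sum_i(d-r_i)$, and run the coordinate-wise double count.
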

By choosing $m \gg k/\eps$, we get that the list size for decoding up to $\frac{k}{k+1}(1-R-\eps)$ is at most $(k-1)^2+1$. For example, if we were constrained to deal with an output list of size at most 50, this theorem shows that we can approach a decoding radius of $\frac{8}{9}(1-R)$ by increasing $m$. We also remark that the decoding radius of $\frac{k}{k+1}(1-R)$ is larger than the Johnson bound $1-\sqrt{R}$ whenever $R\geq \frac{1}{k^2}$.

\subsection{Intersections of Hamming balls and Affine Subspaces}
%
As mentioned before, \cref{thm:gen_lin} implies that when decoding $m$-folded RS codes up to $ \approx \frac{k}{k+1}\inparen{1-R}$, where $m$ is sufficiently large compared to $k$, the list size is bounded by $k\cdot (k+1)^{k-2}$. 

This implies the results of \cite{KRSW23, Tamo24} for capacity achieving codes, but also works for fixed small values of $k=2,3,\cdots$. The case $k=1$ is just unique decoding. We note that the case of $k=2$ and the corresponding list size of 2 was also shown by \cite{Tamo24}, but his method did not generalize to $k>3$. 

We use many of the same techniques as earlier works, but structure our proof in a bottom-up inductive approach instead of a top-down random pinning argument . We start by showing a simple combinatorial argument that shows that an affine subspace of dimension 1, or a line, can intersect a Hamming ball of radius $\frac{s}{s+1}\Delta$ in at most $s$ points. This relies on the simple observation that given a line, $[n]$ can be divided into two sets $S$ and $\overline{S}$ such that all the points on the line agree on $\overline{S}$, and any two points on the line differ \emph{everywhere} on $S$. Thus, the restriction to $S$ can be seen as a distance 1 code, and moreover $|S| \geq \Delta n$.

Therefore, the agreement sets between codewords and the received word must be disjoint over $S$, and if there were $s+1$ codewords in the list, one of these agreement sets must have size at most $\frac{|S|}{s+1}$ (when restricted to $S$). This codeword and the received word differ in at least $\frac{s}{s+1}\cdot |S| \geq \frac{s}{s+1}\Delta n$ positions, contradicting its membership in the list.

We then use an induction on the dimension of the affine subspace, and we can conclude an upper bound on list size of $(1/\eps)^{1/\eps}$ for decoding up to $1-R-\eps$ by choosing $k\approx 1/\eps$ and $m\approx 1/\eps^2$.

\subsection{Improvements using the Folded Structure}
Until now, our results are based on the above general argument applied to the RS code underlying the folded RS code. Rephrasing the argument for 1-dimensional case above, we can say that fixing a single coordinate in $S$ to be an agreement determines the entire codeword. Therefore, each coordinate may appear in only one agreement set. For higher dimensions, fixing a coordinate to be an agreement fixes its value, and therefore reduces the dimension of the affine subspace by 1. We then use the inductive hypothesis to obtain a bound on the number of agreement sets any coordinate belongs to. This acts as a weak version of the disjointness property, and a double counting argument similar to 1-dimensional case finishes the proof.

%

For a folded RS code, fixing a particular coordinate to be an agreement actually gives us multiple equations, and can reduce the dimension by much more than 1. In an ideal case, all of these equations will be linearly independent, and we can fix the entire codeword after fixing a single coordinate. This would allow us to conclude that the agreement sets are disjoint, as in the 1-dimensional case above. Unfortunately, such a strong guarantee need not hold. 

However, \cite{GK16} proved a global upper bound on the sum of rank deficit we see in each coordinate. This is based on the notion of a folded Wronskian determinant criterion for linear independence, and this part of the proof was also used by \cite{KRSW23, Tamo24}. However, with our sharper bottom-up structure of the proof, we are able to improve the list size to $(k-1)^2+1$. 

This gives the optimal list size bound of 2 for decoding up to $\frac{2}{3}(1-R)$, and in the regime of list decoding capacity with $k\approx 1/\eps$, the list size is bounded by $\calO(1/\eps^2)$.

\subsection{Discussion}
%
While our results are written for folded RS codes, the exact same machinery also applies for univariate machinery codes - the list is contained in an affine subspace of constant dimension, and the determinant of the corresponding Wronskian matrix is non-zero as a polynomial. This allows us to get a similar bound on list size of univariate multiplicity codes. To simplify presentation, we choose to focus only on folded RS codes in this paper.

One advantage of the arguments of \cite{KRSW23, Tamo24} is that they immediately suggest randomized algorithms to find the list in linear time, given a basis for the affine subspace. 
One wonders whether our proof technique can be used to give a deterministic near-linear time algorithm to obtain the list given a basis for the affine subspace in which it is contained. If true, this would give a near-linear time \emph{deterministic} algorithm for decoding folded RS codes using the work of Goyal, Harsha, Kumar, and Shankar \cite{GHKS24}.

Indeed, when decoding up to $\frac{2}{3}(1-R)$, which means we are dealing with a 1-dimensional affine subspace, a simple near-linear time deterministic algorithm can be obtained. If the affine subspace is $\{f_0 + \alpha f_1 \suchthat \alpha \in \F_q\}$, and the received word is $g$, we use the two most frequent values appearing among $\{\frac{g(i)-f_0(i)}{f_1(i)} \}$ over $i$ such that $f_1(i)\neq 0$. This avoids having to try all possible values in $\F_q$ for $\alpha$, which naively would require quadratic time. Can this idea be generalized to higher dimensional affine subspaces?

Finally, the notion of Wronskian determinants is tailored to the algebraic structure of folded RS and multiplicity codes. Can we generalize it to general linear codes, and what further applications could it have?

\subsection{Concurrent Work}

The results in this work originally appeared in the author's thesis \cite{Sri24}, where it was posed as an open problem to improve the list size to optimal $(k-1)+1=k$ instead of $(k-1)^2+1$.
Around the same time, in an independent work, Chen and Zhang \cite{CZ24} showed that (explicit) folded Reed-Solomon codes indeed have this optimal list size property, exhibiting the first proof of such a result among explicit codes. Their proof also uses the properties of Wronskian matrices established in \cite{GK16}, but otherwise uses different tools compared to this work to get the optimal list size.

\section{Preliminaries}

\begin{definition}[Distance and agreement]
	Let $\Sigma$ be a finite alphabet and let $f,g\in \Sigma^n$. Then the (fractional) distance
        between $f,g$ is defined as \[ \Delta(f,g) = \mathbb{E}_{i\in [n]}\insquare{ \one \inbraces{f_i \neq g_i} }\mper \]
    Likewise, the (fractional) agreement between $f,g$ is defined as \[ \agr(f,g) = \mathbb{E}_{i\in [n]}\insquare{ \one \inbraces{f_i = g_i} }\mper \]
\end{definition}

Throughout this paper, we will always use distance and agreement to mean fractional distance and fractional agreement respectively.
\begin{definition}[Code, distance and rate]
	A code $\calC$ of block length $n$, distance $\delta$ and rate $R$ over an alphabet size $\Sigma$ is a set $\calC \subseteq \Sigma^n$ with the following properties
	\begin{enumerate}[(i)]
		\item $R = \frac{\log_{|\Sigma|} |\calC|}{n}$
		\item $\delta = \min_{\substack{h_1,h_2\in \calC \\ h_1\neq h_2}} \Delta(h_1,h_2)$
	\end{enumerate}
\end{definition}

\begin{definition}[List of codewords]
	Let $\calC$ be a code with alphabet $\Sigma$ and blocklength $n$. Given $g\in \Sigma^n$, we use $\calL(g,\eta)$ to denote the list of codewords from $\calC$ whose distance from $g$ is less than $\eta$. That is,
	\[
		\calL(g,\eta) = \inbraces{ h\in \calC \suchthat \Delta(g,h) <\eta}\mper
	\]
\end{definition}

	We say that a code is combinatorially list decodable up to radius $\eta$ if for every $g\in \Sigma^n$, $\calL(g,\eta)$ is of size at most $\poly(n)$. Likewise, we say a code is efficiently list decodable up to radius $\eta$ if it is combinatorially list decodable up to $\eta$, and the list $\calL(g,\eta)$ can be found in time $\poly(n)$.
\begin{definition}[Folded Reed-Solomon Codes]
	Let $\F_q$ be a field with $q>n$, and $\gamma$ be an element of order at least $n$. The encoding function for the $m$-folded Reed-Solomon code $\frs$ with rate $R$, blocklength $N = n/m$, alphabet $\F_q^m$, and distance $1-R$ is denoted as $\efrs : \F_q[X] \rightarrow (\F_q^m)^N$, given by
	\[
		f(X) \rightarrow \insquare{ \begin{pmatrix}
	f(1) \\
	f(\gamma)\\
	\vdots \\
	f(\gamma^{m-1})
	\end{pmatrix}, 
	\begin{pmatrix}
	f(\gamma^m) \\
	f(\gamma^{m+1})\\
	\vdots \\
	f(\gamma^{2m-1})
	\end{pmatrix},
	\cdots ,\begin{pmatrix}
	f(\gamma^{n-m}) \\
	f(\gamma^{n-m+1})\\
	\vdots \\
	f(\gamma^{n-1})
	\end{pmatrix} } \in (\F_q^m)^{N}
	\]
	The code $\frs$ is given by
	\[
		\frs = \inbraces{ \efrs(f(X)) \suchthat f(X) \in \F_q[X]^{<Rn}}
	\]
\end{definition}

We will also use $\efrs(S)$ to denote the set of all $\efrs(s)$ for $s\in S$. Under this notation, $\frs = \efrs(\F_q^{<Rn})$.

An $m$-folded RS code with $m=1$ is called the (full-length) Reed-Solomon code.
We will be needing the following main result of \cite{Gur11} throughout the paper.

\begin{theorem}[\cite{Gur11}]\label{thm:lin_alg_rs}
	Let $\frs$ be an $m$-folded Reed-Solomon code of blocklength $N=n/m$ and rate $R$. For any integer $k$, $1\leq k\leq m$, and for any $g\in (\F_q^m)^N$, there exists a affine subspace $\calH$ of $\F_q[X]^{<Rn}$ of dimension $k-1$ such that
	\[
		\calL\inparen{g,\frac{k}{k+1}\inparen{1-\frac{m}{m-k+1}R}} \sub \efrs(\calH)
	\]
	In particular, 
	\[
		\left| \calL\inparen{g,\frac{k}{k+1}\inparen{1-\frac{m}{m-k+1}R}} \right| \leq \left| \efrs(\calH) \right| = |\calH| = q^{k-1}
	\]
	Moreover, a basis for $\calH$ can be found in time $\calO((Nm\log q)^2)$.
\end{theorem}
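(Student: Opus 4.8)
The plan is to reproduce the linear-algebraic list decoder of Guruswami \cite{Gur11} (which builds on \cite{PV05,GR08} and a simplification of Vadhan \cite{Vadhan12}): interpolate a low-degree multivariate polynomial through the received data, argue that every codeword in the list must satisfy the resulting functional equation, and then show that the space of solutions to that equation is an affine subspace of dimension at most $k-1$. Concretely, write the $j$-th coordinate of $g$ as $(g_{j,0},\dots,g_{j,m-1}) \in \F_q^m$, corresponding to the evaluations at $\gamma^{jm},\dots,\gamma^{jm+m-1}$. First I would find, by solving a homogeneous linear system, a nonzero polynomial
\[
Q(X,Y_1,\dots,Y_k) \;=\; A_0(X) + A_1(X)\,Y_1 + \dots + A_k(X)\,Y_k
\]
with $\deg A_0 \le D$ and $\deg A_i \le D-(Rn-1)$ for $1\le i\le k$, subject to the $N(m-k+1)$ constraints
\[
Q\big(\gamma^{jm+\ell},\, g_{j,\ell},\, g_{j,\ell+1},\, \dots,\, g_{j,\ell+k-1}\big) = 0 \qquad (0\le j<N,\ 0\le \ell\le m-k).
\]
The number of free coefficients is $(k+1)D - k(Rn-1) + (k+1)$, so taking $D$ to be the largest integer with $(k+1)D - k(Rn-1) + (k+1) > N(m-k+1)$, which is roughly $\frac{1}{k+1}\big(\frac{m-k+1}{m}n + kRn\big)$, makes the system underdetermined and forces a nonzero $Q$ to exist.

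Next, suppose $f \in \F_q[X]^{<Rn}$ has $\efrs(f) \in \calL\big(g,\eta\big)$ with $\eta = \frac{k}{k+1}\big(1-\frac{m}{m-k+1}R\big)$. Consider the univariate polynomial
\[
B(X) \;:=\; Q\big(X,\,f(X),\,f(\gamma X),\,\dots,\,f(\gamma^{k-1}X)\big),
\]
which has degree at most $D$ by the degree bounds on the $A_i$. For every block $j$ on which $\efrs(f)$ and $g$ agree, each of the $m-k+1$ elements $\gamma^{jm},\dots,\gamma^{jm+m-k}$ is a root of $B$. Since $\Delta(\efrs(f),g) < \eta$, strictly more than $(1-\eta)N$ blocks agree, so $B$ has more than $(1-\eta)N(m-k+1)$ roots; a direct computation gives $(1-\eta)N(m-k+1) = \frac{1}{k+1}\big(\frac{m-k+1}{m}n + kRn\big) \ge D \ge \deg B$, so $B$ has more roots than its degree and therefore $B\equiv 0$. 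Thus every such $f$ satisfies the identity $A_0(X) + \sum_{i=1}^k A_i(X)\,f(\gamma^{i-1}X) = 0$.

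Let $\calH_0 \subseteq \F_q[X]^{<Rn}$ be the set of all $f$ satisfying this identity. Since $f \mapsto \big(f(X),f(\gamma X),\dots,f(\gamma^{k-1}X)\big)$ is $\F_q$-linear and $Q$ is affine in the $Y_i$, the set $\calH_0$ is an affine subspace, and it remains to bound $\dim \calH_0 \le k-1$. If $A_1 = \dots = A_k = 0$ then $\calH_0$ is empty, since $A_0 \ne 0$ (otherwise $Q\equiv 0$) and the equation $A_0(X)=0$ has no solution. Otherwise, taking $\gamma$ primitive so that $E(X) := X^{q-1}-\gamma$ is irreducible over $\F_q$, one checks by expanding in $\F_q[X]/(E(X))$ that $f(\gamma X) \equiv f(X)^q \pmod{E(X)}$ for all $f$ of degree $<q-1$; hence in the field $\mathbb{K} := \F_q[X]/(E(X))$ the image $\phi$ of $f$ is a root of the affine $\F_q$-linearized polynomial $A_0 + A_1 Z + A_2 Z^q + \dots + A_k Z^{q^{k-1}}$. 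Its root set is a coset of the kernel of an $\F_q$-linear map whose linearized part has $q$-degree $k-1$, hence an affine $\F_q$-subspace of $\mathbb{K}$ of dimension at most $k-1$. Because $Rn < \deg E$, reduction modulo $E$ is injective on $\F_q[X]^{<Rn}$, so $\calH_0$ maps injectively into that subspace, giving $\dim \calH_0 \le k-1$; I then take $\calH$ to be any $(k-1)$-dimensional affine subspace of $\F_q[X]^{<Rn}$ containing $\calH_0$.

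By construction $\calL(g,\eta) \subseteq \efrs(\calH_0) \subseteq \efrs(\calH)$, and since the code has positive distance $\efrs$ is injective, so $|\efrs(\calH)| = |\calH| = q^{k-1}$, which yields the stated bound. For the running time, both the interpolation system and the linear system cutting out $\calH_0$ have $O(Nm)$ unknowns and constraints over $\F_q$ and can be solved within the claimed $O((Nm\log q)^2)$ bound using their structure, as worked out in \cite{Gur11}. I expect the delicate point to be the dimension bound in the third step: one must dispose of the degenerate case $A_1 = \dots = A_k = 0$, justify the reduction-mod-$E$ trick (this needs $\deg A_i < \deg E$ and an appropriate choice of $\gamma$ — or, following \cite{Gur11}, one replaces it by a direct coefficient-comparison argument that never passes to an extension field), and carefully track the off-by-one terms in the choice of $D$ so that the strict inequality $\Delta < \eta$ leaves exactly enough slack for $B$'s root count to exceed $\deg B$.
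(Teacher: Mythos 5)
The paper states this theorem as a black-box import from \cite{Gur11} and gives no proof of its own, so there is nothing internal to compare against; your reconstruction follows exactly the linear-algebraic interpolation argument of that reference (interpolate $Q$ affine in $Y_1,\dots,Y_k$, count roots of $B(X)$ against $\deg B \le D$ to force the functional identity, then bound the solution space's dimension by $k-1$), and the parameter arithmetic checks out. The one point worth noting is that your dimension bound passes to the extension field $\F_q[X]/(X^{q-1}-\gamma)$ and views the identity as an affine $q$-linearized polynomial, whereas \cite{Gur11} uses a direct coefficient-comparison recurrence that also yields the claimed $\calO((Nm\log q)^2)$ algorithm for computing a basis of $\calH$; you correctly flag this alternative and the degenerate case $A_1=\dots=A_k=0$, so the proposal is sound.
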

	We mention that if $\calH$ is an affine subspace of $\F_q[X]^{<Rn}$ with dimension $k-1$, then $\efrs(\calH)$ is an $\F_q$-linear subspace of $\frs$ of dimension $k-1$. We will therefore use list-containing affine subspaces in both $\F_q[X]^{<Rn}$ and $\frs$.

\section{Intersection of affine subspace and Hamming balls}\label{sec:gen_linear}

In this section, we show that the intersection of a low-dimensional affine subspace and a Hamming ball cannot be too large for any code, giving alphabet-independent bounds on the list size. Let us start with the easiest case where we show that a 1-dimensional affine subspace (essentially, a line) intersects Hamming balls of radius $\frac{2\Delta}{3}$ in at most 2 places.

\begin{lemma}\label{lem:dim1}
	Let $\calC$ be a linear code of distance $\Delta$ and blocklength $n$ over alphabet $\F_q^m$, and let $\calH \sub \calC$ be an affine subspace of dimension 1. Then, for any $g\in (\F_q^m)^n$ and integer $k\geq 1$,
	\[
		\abs{\calH \cap \calL \inparen{g,\frac{k}{k+1}\Delta } } \leq k.
	\]
\end{lemma}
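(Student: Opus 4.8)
The plan is to prove this by a double-counting argument on agreement sets, following the sketch in Section~2.3. Fix two distinct points $h_0, h_1 \in \calH$ and write the line as $\calH = \{h_0 + \alpha v : \alpha \in \F_q\}$ with $v = h_1 - h_0 \neq 0$; since $\calC$ is linear, $v \in \calC$. Partition the coordinate set $[n]$ into $\overline{S} = \{i : v_i = 0\}$ and $S = \{i : v_i \neq 0\}$, where $v_i \in \F_q^m$ is the $i$-th symbol of $v$.

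First I would record two structural facts. \emph{(1) Any two distinct codewords of $\calH$ differ in every coordinate of $S$:} if $c = h_0 + \alpha v$ and $c' = h_0 + \beta v$ with $\alpha \neq \beta$, then $c_i - c'_i = (\alpha-\beta)v_i$, and since $\alpha - \beta \in \F_q^*$ and $v_i \neq 0$ in $\F_q^m$, this symbol is nonzero; equivalently the restriction of $\calH$ to $S$ behaves like a distance-$1$ code. \emph{(2) $\abs{S} \geq \Delta n$:} the two distinct codewords $h_0, h_1$ agree on $\overline{S}$ by definition, so all of their $\geq \Delta n$ disagreements lie in $S$.

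Next, suppose toward a contradiction that $\calH \cap \calL\inparen{g,\frac{k}{k+1}\Delta}$ contains $k+1$ distinct codewords $c_1,\dots,c_{k+1}$. For each $j$ let $A_j = \{i \in S : (c_j)_i = g_i\}$. If $i \in A_j \cap A_{j'}$ with $j \neq j'$, then $(c_j)_i = g_i = (c_{j'})_i$, contradicting fact~(1); hence $A_1,\dots,A_{k+1}$ are pairwise disjoint subsets of $S$, so $\sum_j \abs{A_j} \leq \abs{S}$ and there is some $j^\star$ with $\abs{A_{j^\star}} \leq \abs{S}/(k+1)$. Then $c_{j^\star}$ disagrees with $g$ on at least $\abs{S} - \abs{A_{j^\star}} \geq \frac{k}{k+1}\abs{S} \geq \frac{k}{k+1}\Delta n$ coordinates, so $\Delta(g,c_{j^\star}) \geq \frac{k}{k+1}\Delta$, contradicting $c_{j^\star} \in \calL\inparen{g,\frac{k}{k+1}\Delta}$. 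This proves the bound.

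The argument is elementary, and I do not anticipate a genuine obstacle. The only points needing care are the two structural facts: that multiplying a nonzero $\F_q^m$-symbol by a nonzero scalar of $\F_q$ stays nonzero (so distinct points of $\calH$ agree on no coordinate of $S$), and that the distance of the code forces $\abs{S} \geq \Delta n$ — this is the only place the hypothesis on $\Delta$ enters. Beyond that it is just the pigeonhole/double-counting bookkeeping, noting that disagreements outside $S$ only help the lower bound on $\Delta(g,c_{j^\star})$. (The degenerate case $\Delta = 0$ is vacuous since then $\calL(g,0) = \emptyset$.)
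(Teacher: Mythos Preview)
Your proof is correct and follows essentially the same approach as the paper's: parametrize the line, let $S$ be the support of the direction vector (which lies in $\calC$ by linearity, so $\abs{S}\geq \Delta n$), observe that agreement sets restricted to $S$ are pairwise disjoint, and apply pigeonhole to get a contradiction. The only cosmetic difference is that the paper directly posits $\calH=\{f_0+\alpha f_1:\alpha\in\F_q\}$ with $f_1\in\calC$, whereas you derive the direction vector as $v=h_1-h_0$ from two chosen points; the content is identical.
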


\begin{proof}
	Let $\calH = \{f_0 + \alpha \cdot f_1 \suchthat \alpha \in \F_q\}$ for some $f_0$ and $f_1$ in $\calC$, and let $S\sub [n]$ be the set of coordinates where $f_1$ is non-zero. Clearly, $|S| \geq \Delta \cdot n$.
	
	Let $S_h \sub S$ denote the set of coordinates in $S$ where $g$ and $h\in \calH$ agree. Note that for any two distinct $h_1,h_2 \in \calH$, they differ on every coordinate in $S$. This means that for any distinct $h_1, h_2 \in \calH$, the sets $S_{h_1}$ and $S_{h_2}$ are disjoint.
	
	Now for the sake of contradiction, assume there are $k+1$ codewords \[ h_1,h_2,\cdots ,h_{k+1} \in \calH \cap \calL\inparen{g,\frac{k\Delta}{k+1}}\mper \] 
	Then for at least one of these $h_i$, its $S$-agreement with $g$ must be small so that $|S_{h_i}| \leq \frac{|S|}{k+1}$. For this $h_i$, it therefore also holds that its disagreement with $g$ is at least $\frac{k|S|}{k+1} \geq \frac{k\Delta}{k+1}\cdot n$, which contradicts $h_i \in \calL\inparen{g,\frac{k\Delta}{k+1}}$.
\end{proof}

%
%
Using the above lemma, it is easy to show that $m$-folded Reed-Solomon codes are decodable up to $\frac{2}{3}(1-\frac{m}{m-1}R)$ with lists of size 2. By choosing $m$ to be large enough, this radius can be made arbitrarily close to $\frac{2}{3}(1-R)$, which is the best possible radius for decoding with lists of size 2. This recovers a result of \cite{Tamo24}.
\begin{corollary}
	Let $\frs$ be an $m$-folded Reed-Solomon code of blocklength $N=n/m$ and rate $R$. For any $g\in (\F_q^m)^N$, it holds that
	\[
		\left| \calL\inparen{g,\frac{2}{3}\inparen{1-\frac{m}{m-1}R}} \right| \leq 2
	\]
\end{corollary}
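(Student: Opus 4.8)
The plan is to instantiate the general linear-algebra structure of folded Reed-Solomon codes (Theorem~\ref{thm:lin_alg_rs}) with $k=2$, and then apply the one-dimensional intersection bound (Lemma~\ref{lem:dim1}) to the resulting affine subspace. Concretely, Theorem~\ref{thm:lin_alg_rs} with $k=2$ tells us that for any received word $g\in(\F_q^m)^N$, there is an affine subspace $\calH$ of $\F_q[X]^{<Rn}$ of dimension $k-1=1$ such that the list $\calL\bigl(g,\tfrac{2}{3}(1-\tfrac{m}{m-1}R)\bigr)$ is contained in $\efrs(\calH)$. By the remark following Theorem~\ref{thm:lin_alg_rs}, $\efrs(\calH)$ is an $\F_q$-linear (affine) subspace of $\frs$ of dimension $1$.

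Next I would observe that $\frs$ is a linear code of blocklength $N$ and distance $\Delta = 1-R$ over the alphabet $\F_q^m$, so it meets the hypotheses of Lemma~\ref{lem:dim1}. Applying that lemma with this code, the one-dimensional affine subspace $\efrs(\calH)\sub\frs$, the received word $g$, and $k=2$ gives
\[
	\abs{\efrs(\calH)\cap\calL\inparen{g,\tfrac{2}{3}\Delta}} \leq 2.
\]
Since $\frac{m}{m-1}R \geq R$, we have $\tfrac{2}{3}\bigl(1-\tfrac{m}{m-1}R\bigr)\leq \tfrac{2}{3}(1-R)=\tfrac{2}{3}\Delta$, so $\calL\bigl(g,\tfrac{2}{3}(1-\tfrac{m}{m-1}R)\bigr)\subseteq \calL\bigl(g,\tfrac{2}{3}\Delta\bigr)$. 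Combining with the containment $\calL\bigl(g,\tfrac{2}{3}(1-\tfrac{m}{m-1}R)\bigr)\subseteq\efrs(\calH)$ from Theorem~\ref{thm:lin_alg_rs} yields
\[
	\calL\inparen{g,\tfrac{2}{3}\inparen{1-\tfrac{m}{m-1}R}} \;\subseteq\; \efrs(\calH)\cap\calL\inparen{g,\tfrac{2}{3}\Delta},
\]
and the right-hand side has size at most $2$ by the above.

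There is essentially no obstacle here — the corollary is a direct composition of two results already established, and the only thing to be careful about is the bookkeeping of which radius is used where (the slightly smaller radius $\tfrac{2}{3}(1-\tfrac{m}{m-1}R)$ coming from Theorem~\ref{thm:lin_alg_rs} versus the radius $\tfrac{2}{3}\Delta$ to which Lemma~\ref{lem:dim1} applies). One should also note that Lemma~\ref{lem:dim1} is stated for alphabet $\F_q^m$, which matches the folded code exactly, so no reinterpretation of the alphabet is needed. The more substantive content — that the list lies in a one-dimensional affine subspace in the first place — is entirely inherited from Theorem~\ref{thm:lin_alg_rs}.
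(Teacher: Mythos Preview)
Your proposal is correct and follows essentially the same approach as the paper: instantiate \cref{thm:lin_alg_rs} with $k=2$ to place the list inside a one-dimensional affine subspace, then apply \cref{lem:dim1} with $k=2$ and $\Delta=1-R$, using the inequality $\tfrac{2}{3}\bigl(1-\tfrac{m}{m-1}R\bigr)\leq \tfrac{2}{3}(1-R)$ to pass between radii. Your write-up is slightly more explicit about the radius bookkeeping and the identification of $\efrs(\calH)$ with an affine subspace of $\frs$, but the argument is the same.
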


\begin{proof}
	We use \cref{thm:lin_alg_rs} to find a 1-dimensional affine subspace $\calH$ of $\frs$ that contains the list $\calL\inparen{g,\frac{2}{3}\inparen{1-\frac{m}{m-1}R}}$. Then,
	\begin{align*}
		\abs{\calL\inparen{g,\frac{2}{3}\inparen{1-\frac{m}{m-1}R}}} &= \abs{\calL\inparen{g,\frac{2}{3}\inparen{1-\frac{m}{m-1}R}} \cap \calH} \\
		&\leq \abs{\calL\inparen{g,\frac{2}{3}\inparen{1-R}} \cap \calH} \\
		&\leq 2 && \text{[\cref{lem:dim1}]}
	\end{align*}
\end{proof}

Next, we generalize \cref{lem:dim1} to deal with affine subspaces of higher dimensions using induction.
\begin{lemma}\label{lem:dimd}
	Let $\calC$ be a linear code of distance $\Delta$ and blocklength $n$ over alphabet $\F_q^m$, and let $\calH \sub \calC$ be an affine subspace of dimension $d$. Then, for any $g\in (\F_q^m)^n$,
	\[
		\abs{\calH \cap \calL \inparen{g,\frac{k}{k+1} \Delta} } \leq k(k+1)^{d-1}.
	\]
\end{lemma}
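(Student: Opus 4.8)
The plan is to prove \cref{lem:dimd} by induction on the dimension $d$, with the base case $d=1$ already established in \cref{lem:dim1}. The inductive step generalizes the double-counting argument of \cref{lem:dim1}: instead of the agreement sets $S_h$ being pairwise disjoint over a large coordinate set, we will show each coordinate can belong to only a bounded number of agreement sets, and then finish by averaging.

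\textbf{Setup of the inductive step.} Suppose $\calH$ has dimension $d\geq 2$. For a codeword $h\in\calH$, let $S_h\sub[n]$ be the set of coordinates where $h$ agrees with $g$; if $h\in\calL(g,\frac{k}{k+1}\Delta)$ then $|S_h| > (1-\frac{k}{k+1}\Delta)n \geq (1-\frac{k}{k+1})n$ actually we need $|S_h| \geq n - \frac{k}{k+1}\Delta n$. The key structural observation is that for a \emph{fixed} coordinate $i\in[n]$, the subset of $\calH$ consisting of codewords that agree with $g$ at coordinate $i$ is an affine subspace of dimension at most $d-1$ — because requiring $h_i = g_i$ imposes (at least) one nontrivial affine constraint on $\calH$, unless that constraint is trivially satisfied by all of $\calH$, in which case we can simply discard coordinate $i$ as contributing equally to every agreement set. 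So, after throwing away the coordinates on which all of $\calH$ already agrees or all of $\calH$ already disagrees with $g$ (the latter only helping us), each remaining coordinate $i$ lies in $S_h$ for at most $k(k+1)^{d-2}$ many codewords $h$ in the list, by the inductive hypothesis applied to that $(d-1)$-dimensional affine subspace.

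\textbf{Finishing by double counting.} Suppose for contradiction the list has $N := k(k+1)^{d-1}+1$ codewords $h_1,\dots,h_N$. Summing $|S_{h_j}|$ over $j$: each has $|S_{h_j}| \geq n - \frac{k}{k+1}\Delta n$, so the total is at least $N(n - \frac{k}{k+1}\Delta n)$. On the other hand, writing this sum as $\sum_{i\in[n]} \#\{j : i\in S_{h_j}\}$ and using that each coordinate contributes at most $k(k+1)^{d-2}$ (on coordinates where $\calH$ is not constantly-agreeing; on constantly-agreeing coordinates it contributes $N$, but there the agreement is "free" and one should instead restrict all $h_j$ and $g$ to the complementary coordinate set, which still has density at least $\Delta$ — this is the same reduction as in \cref{lem:dim1} where we passed to $S$). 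Carrying out the arithmetic: on the coordinate set $T$ where $\calH$ is not constantly agreeing with $g$, we have $|T|\geq \Delta n$ (since any two distinct codewords of a distance-$\Delta$ code disagree somewhere, and more carefully the "free agreement" coordinates number at most $(1-\Delta)n$), and each $h_j$ disagrees with $g$ on fewer than $\frac{k}{k+1}\Delta n \leq \frac{k}{k+1}|T|$ coordinates of $T$, hence agrees on more than $\frac{1}{k+1}|T|$ coordinates of $T$. Then $\sum_j |S_{h_j}\cap T| > \frac{N}{k+1}|T|$, while $\sum_j |S_{h_j}\cap T| \leq k(k+1)^{d-2}|T|$, giving $N \leq k(k+1)^{d-1}$, a contradiction.

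\textbf{Main obstacle.} The delicate point is handling the coordinates where the affine constraint "$h_i = g_i$" is trivial on $\calH$ — i.e. where every codeword in $\calH$ already agrees (or already disagrees) with $g$. The argument must be organized so that these are removed up front by restricting to a sub-coordinate-set $T$ of density at least $\Delta$ on which $\calH$ behaves like a genuine code whose codewords pairwise differ "enough" — mirroring the passage to $S$ in \cref{lem:dim1} — and only then invoke the inductive hypothesis coordinate-by-coordinate on $T$. Verifying $|T| \geq \Delta n$ and that the inductive hypothesis genuinely applies to the dimension-$(d-1)$ slice at each $i\in T$ (in particular that the restricted subspace still sits inside a code of distance $\Delta$ on the relevant coordinates) is where the care is needed; the rest is the routine averaging above.
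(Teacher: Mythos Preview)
Your proposal is correct and follows the same approach as the paper: induction on $d$, pass to the coordinate set where $\calH$ is non-constant (the paper calls it $S$ and defines it as the union of supports of the basis vectors $f_1,\dots,f_d$, which is slightly cleaner than defining it via $g$ as you do and makes $|S|\geq \Delta n$ immediate), then double-count agreements using the inductive hypothesis on the slice $\{h\in\calH: h_i = g_i\}$ at each coordinate $i\in S$. Your ``main obstacle'' is not actually an obstacle: that slice is an affine subspace of dimension $\leq d-1$ of the \emph{original} code $\calC$ over all $n$ coordinates, so the inductive hypothesis applies directly with the same $g$ and the same $\Delta$---there is no need to restrict to fewer coordinates or to verify any distance-preservation condition.
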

\begin{proof}
The proof is by induction on $d$. The base case $d=1$ is proved in \cref{lem:dim1}, and now we prove it for $d\geq 2$ while assuming it is true for $d-1$.

Denote $\calH_g = \calH \cap \calL \inparen{g,\frac{k}{k+1} \Delta}$.

	Let $\calH = \{f_0 + \alpha_1 \cdot f_1 + \cdots +\alpha_d \cdot f_d \suchthat \alpha_i \in \F_q, \forall i \in [d]\}$ for some $f_0, f_1, \cdots ,f_d$ in $\calC$. Let $S\sub [n]$ be the set of coordinates where at least one of $f_1, \cdots, f_d$ is non-zero. By the distance of the code, $|S| \geq \Delta n$. 
	
	As before, we define $S_h \sub S$ to be the set of coordinates in $S$ where $g$ and $h \in \calH$ agree.

	Next, we would like an analog of the disjointness property for agreement sets $\{S_h\}_{h\in \calH}$. We claim that any coordinate $i \in S$ will appear in at most $k(k+1)^{d-2}$ sets in $\{S_h\}_{h\in \calH_g}$. This is because every $h\in \calH_g$ whose $S_h$ contains $i$ must have $h_i = g_i$, and so the collection of these $h$ are restricted to a $(d-1)$-dimensional affine subspace inside $\calH$. By the inductive hypothesis, the number of such $h$ is at most $k(k+1)^{d-2}$. Therefore,
	\[
		\sum_{h \in \calH_g} |S_h| \leq k(k+1)^{d-2} \cdot |S|.
	\]
	It is easy to observe that every $h \in \calH_g$ must have $|S_h| > \frac{|S|}{k+1}$. If not, $g$ and $h$ disagree on at least $ \frac{k}{k+1} |S|$ positions, which is at least $\frac{k}{k+1} \Delta n$, contradicting $h\in \calL \inparen{g,\frac{k}{k+1} \Delta}$. Combining the two,
	\begin{gather*}
		k(k+1)^{d-2} \cdot |S| \geq \sum_{h \in \calH_g} |S_h| > \sum_{h \in \calH_g} \frac{|S|}{k+1} = |\calH_g| \frac{|S|}{k+1} \\
		|\calH_g| < k(k+1)^{d-1}
	\end{gather*}
\end{proof}

\section{Getting more out of the Folded RS code}

The key idea we used in the previous section was that fixing any coordinate to be in the agreement set reduces the search space dimension by 1. However, here we only used agreement of $g$ with a Reed-Solomon codeword, whereas we have the opportunity to decrease the dimension much more by using the agreement of $g$ with a codeword on the \emph{folded} symbol. In an ideal case, such a fixing will uniquely determine the codeword, giving us disjointness of agreement sets as in the case of \cref{lem:dim1} and an optimal list size.

Unfortunately, the set of $m$ constraints imposed by an $m$-folded symbol need not be linearly independent. In fact, since we are considering constraints on a $d$-dimensional space, the best we can hope for are $d$ linearly independent constraints (recall that $m$ is typically chosen so that $m \gg d$). But even $d$ linearly independent constraints need not be guaranteed.

However, these linear dependencies can be bounded in number globally using the Wronskian of (a basis of) the affine subspace we are working with.


Let $\calH$ be an affine subspace of $\F_q[X]^{<Rn}$ with dimension $d$, so that there exist polynomials $h_0, h_1,h_2,\cdots ,h_d$ such that
\[
	\calH = \inbraces{h_0 + \sum_{j=1}^d \alpha_j h_j \suchthat \forall j\in [d], \alpha_j \in \F_q }
\]
Moreover, the set of polynomials $\inbraces{h_1,h_2,\cdots ,h_d}$ is linearly independent over $\F_q$.

The condition that a polynomial $h = h_0 + \sum_{j=1}^d \alpha_j h_j$ agrees with $g$ on position $i\in [N]$ after folding can be written as the collection of $m$ equations: 
\[
	 \forall j\in [m], \quad h(\gamma^{(i-1)m+j-1}) = g(\gamma^{(i-1)m+j-1})
\]
Writing as a linear system,

\begin{align*}
\begin{bmatrix}
h_1(\gamma^{(i-1)m}) & h_2(\gamma^{(i-1)m}) & \cdots & h_d(\gamma^{(i-1)m})\\
h_1(\gamma^{(i-1)m+1}) & h_2(\gamma^{(i-1)m+1}) & \cdots & h_d(\gamma^{(i-1)m+1})\\
\vdots & \vdots & \cdots & \vdots \\
\vdots & \vdots & \cdots & \vdots \\
h_1(\gamma^{(i-1)m+m-1}) & h_2(\gamma^{(i-1)m+m-1}) & \cdots & h_d(\gamma^{(i-1)m+m-1})
\end{bmatrix}
\begin{bmatrix}
\alpha_1 \\
\alpha_2 \\
\vdots \\
\alpha_d
\end{bmatrix}
=
\begin{bmatrix}
(g-h_0)(\gamma^{(i-1)m}) \\
(g-h_0)(\gamma^{(i-1)m+1}) \\
\vdots \\
\vdots \\
(g-h_0)(\gamma^{(i-1)m+m-1})
\end{bmatrix}
\end{align*}

Let us call the $m\times d$ matrix appearing above as $A_i$ for $i\in [N]$, and denote $r_i = \rank(A_i)$. If $r_i$ is always $d$, that is $A_i$ is always full rank, then each agreement $h_i$ and $g_i$ would fix all $\alpha_j$ for $j\in [d]$, and we would get the best case scenario where all agreement sets must be disjoint. However, this need not be true. Guruswami and Kopparty \cite{GK16} used folded Wronskian determinants to show that a weakening of this statement is true in an average sense globally. They wrote this in the language of strong subspace designs, and for completeness we present their proof in our simplified setting.

We first start with the following folded Wronskian criterion for linear independence, whose proof can be found in \cite{GK16}.

\begin{lemma}\label{lem:wronskian}
	Let $\gamma \in \F_q^*$ be a generator. The polynomials $p_1,p_2,\cdots ,p_d \in \F_q[X]^{<Rn}$ are linearly independent over $\F_q$ if and only if the determinant
	\[
		\begin{pmatrix}
			p_1(X) & p_2(X) & \cdots & p_d(X) \\
			p_1(\gamma X) & p_2(\gamma X) & \cdots & p_d(\gamma X) \\
			\vdots & \vdots & \vdots & \vdots \\
			p_1(\gamma^{d-1} X) & p_2(\gamma^{d-1}X) & \cdots & p_d(\gamma^{d-1}X)
		\end{pmatrix}
	\]
	is non-zero as a polynomial in $\F_q[X]$.
\end{lemma}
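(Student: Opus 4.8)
The plan is to reduce \cref{lem:wronskian} to a purely formal statement about ``twisted'' Wronskians and then specialize. Write $\sigma$ for the $\F_q$-algebra automorphism of $F := \F_q(X)$ given by $(\sigma f)(X) = f(\gamma X)$, so the displayed matrix is $\bigl(\sigma^{i-1}p_j\bigr)_{i,j=1}^{d}$ and its determinant is the twisted Wronskian $W_\sigma(p_1,\dots,p_d) \in F$. The general claim I would establish is: $W_\sigma(p_1,\dots,p_d) \neq 0$ if and only if $p_1,\dots,p_d$ are linearly independent over the fixed field $F^\sigma = \{ r\in F : \sigma r = r\}$. One direction is immediate --- if $\sum_j c_j p_j = 0$ with $c_j \in F^\sigma$ not all zero, then applying $\sigma^{i-1}$ and using $\sigma c_j = c_j$ gives $\sum_j c_j \sigma^{i-1}p_j = 0$ for every $i$, so the columns of the matrix are $F$-dependent and the determinant vanishes.

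For the converse I would induct on $d$, the case $d=1$ being trivial since $W_\sigma(p_1) = p_1$. Assume $W_\sigma(p_1,\dots,p_d)=0$. If $W_\sigma(p_1,\dots,p_{d-1})=0$ we finish by induction; otherwise $p_1,\dots,p_{d-1}$ are $F^\sigma$-independent (by the easy direction) and it suffices to exhibit $p_d$ as an $F^\sigma$-linear combination of them. Consider $L(p) := W_\sigma(p_1,\dots,p_{d-1},p)$; expanding along the last column, $L(p) = \sum_{i=1}^{d} b_i\, \sigma^{i-1}p$ with $b_i \in \F_q[X]$ the signed minors and $b_d = W_\sigma(p_1,\dots,p_{d-1}) \neq 0$. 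By construction $L(p_1)=\dots=L(p_{d-1})=0$ (repeated column) and $L(p_d) = W_\sigma(p_1,\dots,p_d)=0$ (hypothesis). Setting $v_j := (p_j, \sigma p_j, \dots, \sigma^{d-2}p_j)\in F^{d-1}$, the nonvanishing of $W_\sigma(p_1,\dots,p_{d-1})$ makes $v_1,\dots,v_{d-1}$ an $F$-basis of $F^{d-1}$, so $v_d = \sum_{j=1}^{d-1} c_j v_j$ for unique $c_j \in F$. Applying $\sigma$ to this relation coordinatewise, and using $L(p_j)=0$ to rewrite $\sigma^{d-1}p_j$ in terms of the lower shifts $\sigma^{0}p_j,\dots,\sigma^{d-2}p_j$, one is left with $\sum_{j=1}^{d-1}(\sigma c_j - c_j)\,\sigma^{\ell}p_j = 0$ for $\ell = 1,\dots,d-1$; the coefficient matrix of this system has determinant $\sigma\bigl(W_\sigma(p_1,\dots,p_{d-1})\bigr)\neq 0$, forcing $\sigma c_j = c_j$, i.e.\ $c_j \in F^\sigma$. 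Reading the first coordinate of $v_d = \sum_j c_j v_j$ gives $p_d = \sum_j c_j p_j$, as wanted.

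It remains to compute $F^\sigma$ and to pass from $F^\sigma$-independence to $\F_q$-independence. Since $\gamma$ is a generator, $\sigma$ has order $q-1$, and a short computation shows that the only rational functions fixed by $X\mapsto \gamma X$ are those in $\F_q(X^{q-1})$, so $F^\sigma = \F_q(X^{q-1})$. For the passage: $\F_q[X]$ is free over $\F_q[X^{q-1}]$ with basis $1,X,\dots,X^{q-2}$, and every $p_j \in \F_q[X]^{<Rn}$ has $\deg p_j < Rn \le q-1$, hence $p_j = \sum_{r=0}^{q-2} a_{j,r}X^r$ with $a_{j,r}\in\F_q$. Given a nontrivial relation $\sum_j c_j p_j = 0$ with $c_j \in \F_q(X^{q-1})$, clear denominators so that $c_j \in \F_q[X^{q-1}]$, and compare the $\F_q[X^{q-1}]$-components along $1,X,\dots,X^{q-2}$: for each $r$ this yields $\sum_j a_{j,r} c_j = 0$, and then extracting any single power of $X^{q-1}$ that actually occurs in some $c_j$ produces a nonzero vector $(\beta_j)_j \in \F_q^{d}$ with $\sum_j \beta_j a_{j,r} = 0$ for all $r$, i.e.\ $\sum_j \beta_j p_j = 0$. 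So over $\F_q[X]^{<Rn}$ the two notions of independence coincide, and \cref{lem:wronskian} follows.

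I expect the inductive step to be the real obstacle, specifically the verification that the coefficients $c_j$ are $\sigma$-invariant: this is the point where one must ``differentiate'' the relation $v_d = \sum_j c_j v_j$ by applying $\sigma$ and feed it back through the operator $L$, and it is exactly what makes the twisted Wronskian behave like the classical one --- a naive derivative-Wronskian argument would fail here in positive characteristic. Everything else is bookkeeping, as long as the degree bound $\deg p_j < \operatorname{ord}(\gamma) = q-1$ is kept in view so that enlarging the scalar field from $\F_q$ to $\F_q(X^{q-1})$ creates no spurious dependencies.
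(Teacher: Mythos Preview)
The paper does not actually prove \cref{lem:wronskian}; it states the result and defers to \cite{GK16} for the proof. So there is no in-paper argument to compare against, and your task was effectively to supply one.

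Your proof is correct. The reduction to the fixed field $F^{\sigma}=\F_q(X^{q-1})$ is the natural abstract formulation, and the inductive step goes through: for $\ell=1,\dots,d-2$ the identity $\sum_j(\sigma c_j-c_j)\,\sigma^{\ell}p_j=0$ follows directly by comparing the original relation $v_d=\sum_j c_j v_j$ with its $\sigma$-shift, while for $\ell=d-1$ one first uses $L(p_d)=0$ together with $L(p_j)=0$ for $j<d$ to extend the original relation to $\sigma^{d-1}p_d=\sum_j c_j\,\sigma^{d-1}p_j$, and only then subtracts. You compress this into a single sentence, but the mechanism is exactly as you indicate. The coefficient matrix of the resulting $(d-1)\times(d-1)$ system is indeed $\sigma$ applied entrywise to the smaller Wronskian, hence invertible, forcing $c_j\in F^{\sigma}$.

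The final descent from $\F_q(X^{q-1})$-dependence to $\F_q$-dependence is also correct and uses exactly the hypothesis that matters: in the paper's setup $q>n$, so $\deg p_j<Rn\le n\le q-1$, whence each $p_j$ lies in the $\F_q$-span of $1,X,\dots,X^{q-2}$ and your coefficient-extraction argument applies. This degree bound is genuinely needed (e.g.\ $p_1=1$, $p_2=X^{q-1}$ are $\F_q$-independent but have vanishing folded Wronskian), and you are right to flag it.
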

Next, we use the lemma above to bound the sum of "rank deficit" over all coordinates.
\begin{theorem}[Guruswami-Kopparty \cite{GK16}]
$\sum_{i=1}^N (d-r_i) \leq \frac{d\cdot Rn}{m-d+1}$.
\end{theorem}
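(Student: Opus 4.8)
The plan is to read off the rank deficits $s_i := d - r_i$ from the zeros of the folded Wronskian. Let $W(X)$ be the determinant of \cref{lem:wronskian} applied to the basis polynomials $h_1,\dots,h_d$ of $\calH$ fixed above. Since these are linearly independent over $\F_q$, $W$ is a nonzero polynomial, and since each $h_j$ has degree $<Rn$ and substituting $\gamma^t X$ for $X$ does not change the degree, expanding the $d\times d$ determinant gives $\deg W \le d(Rn-1) \le dRn$. So it suffices to produce at least $\sum_{i=1}^N (m-d+1)\,s_i$ roots of $W$ in $\F_q$, counted with multiplicity; the inequality then follows by rearranging $\sum_i (m-d+1)s_i \le \deg W \le dRn$ (this is meaningful exactly when $d\le m$, which holds in all our applications since the list-containing subspace has dimension $k-1<m$).

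The key step will be a change of basis adapted to each coordinate $i$. Regard $A_i$ as the linear map $\F_q^d \to \F_q^m$, $\alpha \mapsto A_i\alpha$; its kernel has dimension $d-r_i = s_i$, and an $\alpha$ lies in this kernel precisely when $\sum_j \alpha_j h_j$ vanishes at all $m$ points $\gamma^{(i-1)m},\gamma^{(i-1)m+1},\dots,\gamma^{(i-1)m+m-1}$. I would pick a basis $\beta_{r_i+1},\dots,\beta_d$ of this kernel, extend it to a basis $\beta_1,\dots,\beta_d$ of $\F_q^d$, and set $\tilde h_j := \sum_{l}(\beta_j)_l\, h_l$. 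The change-of-basis matrix $B=\big((\beta_j)_l\big)_{l,j}$ is invertible, so the Wronskian $\tilde W$ of $\tilde h_1,\dots,\tilde h_d$ equals $(\det B)\cdot W$; hence $W$ and $\tilde W$ have the same roots with the same multiplicities, and I may compute with $\tilde W$. By construction, every $\tilde h_j$ with $j>r_i$ vanishes at all $m$ of the points $\gamma^{(i-1)m},\dots,\gamma^{(i-1)m+m-1}$.

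Now I would extract the multiplicity. Fix $\ell\in\{0,1,\dots,m-d\}$ and set $X_0=\gamma^{(i-1)m+\ell}$, and look at the matrix $\tilde M(X)$ with $(t,j)$ entry $\tilde h_j(\gamma^{t}X)$ for $t=0,\dots,d-1$. For $j>r_i$ the entry $\tilde h_j(\gamma^{t}X_0)=\tilde h_j(\gamma^{(i-1)m+\ell+t})$ is zero, since $\ell+t$ lies in $\{0,\dots,m-1\}$ whenever $0\le\ell\le m-d$ and $0\le t\le d-1$. Thus each of the last $s_i$ columns of $\tilde M(X)$, viewed as a vector of polynomials in $X$, vanishes at $X_0$, so every entry in those columns is divisible by $(X-X_0)$; by multilinearity of the determinant in its columns, $(X-X_0)^{s_i}$ divides $\tilde W(X)$, so $W$ vanishes at $X_0$ to order at least $s_i$. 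Finally, for a fixed $i$ the points $\gamma^{(i-1)m+\ell}$, $\ell=0,\dots,m-d$, are $m-d+1$ distinct elements of $\F_q$ (as $\gamma$ has order at least $n$), and for distinct $i$ the exponent sets $\{(i-1)m,\dots,(i-1)m+m-d\}$ lie in disjoint length-$m$ blocks of $\{0,\dots,n-1\}$, so all these roots are pairwise distinct, giving the required count.

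I expect the only genuine obstacle to be the construction in the second paragraph: seeing that one should push the $s_i$ kernel vectors of $A_i$ into the basis, so that a whole block of $s_i$ columns of the Wronskian matrix vanishes simultaneously on an interval of $m-d+1$ evaluation points, thereby converting a rank drop of $s_i$ into a zero of $W$ of order $s_i$. Once that is in place, the rest — the degree bound on $W$, divisibility of a determinant by a power of $(X-X_0)$, and disjointness of the evaluation blocks — is routine bookkeeping of the same flavor as the double-counting in \cref{lem:dimd}.
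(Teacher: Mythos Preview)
Your proof is correct and follows the same global strategy as the paper: bound the degree of the folded Wronskian $W(X)$ by $dRn$, then exhibit $(m-d+1)\sum_i (d-r_i)$ roots counted with multiplicity. The difference is entirely in how the multiplicity $d-r_i$ at each point $X_0=\gamma^{(i-1)m+\ell}$ is established. The paper argues that the first $d-r_i$ formal derivatives $D^{(\ell)}(X_0)$ vanish, by expanding each derivative of the determinant as a sum of $d^\ell$ determinants that still share at least $d-\ell$ columns with the rank-$r_i$ matrix $A_{ij}$. You instead change basis in $\F_q^d$ so that the last $s_i=d-r_i$ basis polynomials $\tilde h_j$ vanish on the entire block $\{\gamma^{(i-1)m},\dots,\gamma^{(i-1)m+m-1}\}$, which makes the last $s_i$ columns of the Wronskian matrix divisible by $(X-X_0)$ at every such $X_0$, and then pull out $(X-X_0)^{s_i}$ by multilinearity of the determinant.

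Your route is arguably more elementary: it avoids differentiating a determinant and the bookkeeping of $d^\ell$ summands, and the divisibility conclusion $(X-X_0)^{s_i}\mid W$ is immediate and characteristic-free, whereas the paper's step ``first $d-r_i$ derivatives vanish $\Rightarrow$ multiplicity $\ge d-r_i$'' is cleanest when read with Hasse derivatives in positive characteristic. Both arguments give exactly the same root count and hence the same inequality.
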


\begin{proof}
	We start with instantiating \cref{lem:wronskian} with $p_j = h_j$ for $j\in [d]$, which are linearly independent polynomials used in the definition of $\calH$. By \cref{lem:wronskian}, the determinant of the following matrix
	\[
		H(X) \defeq \begin{bmatrix}
			h_1(X) & h_2(X) & \cdots & h_d(X) \\
			h_1(\gamma X) & h_2(\gamma X) & \cdots & h_d(\gamma X) \\
			\vdots & \vdots & \vdots & \vdots \\
			h_1(\gamma^{d-1} X) & h_2(\gamma^{d-1}X) & \cdots & h_d(\gamma^{d-1}X)
		\end{bmatrix}
	\]
	is non-zero. Denote this determinant by $D(X) = \det(H(X))$. Since each $h_i$ is of degree at most $Rn$, we note that $D(X)$ is a polynomial of degree at most $dRn$, so that the number of zeros of $D(X)$ (with multiplicity) is bounded by $dRn$. Therefore, it suffices to show that the number of zeros of $D(X)$ is at least $(m-d+1)\cdot \sum_{i=1}^N (d-r_i)$.
	
	In fact, we will describe the exact set of zeros with their mutliplicities that illustrates this. The next claim immediately completes the proof. Note that we say that a non-root is a root with multiplicity 0. 
	\begin{claim}\label{claim:root_with_mult}
		For every $i\in [N]$, for every $j\in [m-d+1]$, $\gamma^{(i-1)m+j-1}$ is a root of $D(X)$ with multiplicity at least $d-r_i$.
	\end{claim}
	\end{proof}
	\begin{proof}[Proof of \cref{claim:root_with_mult}]
		Recall that $r_i$ is the rank of matrix $A_i$. For $j\in [m-d+1]$, let $A_{ij}$ denote the $d\times d$ submatrix of $A_i$ formed by selecting all $d$ columns and rows from $j$ to $j+d-1$. That is,
		\begin{align*}
A_{ij} = \begin{bmatrix}
h_1(\gamma^{(i-1)m+j-1}) & h_2(\gamma^{(i-1)m+j-1}) & \cdots & h_d(\gamma^{(i-1)m+j-1})\\
h_1(\gamma^{(i-1)m+j}) & h_2(\gamma^{(i-1)m+j}) & \cdots & h_d(\gamma^{(i-1)m+j})\\
\vdots & \vdots & \cdots & \vdots \\
\vdots & \vdots & \cdots & \vdots \\
h_1(\gamma^{(i-1)m+j+d-2}) & h_2(\gamma^{(i-1)m+j+d-2}) & \cdots & h_d(\gamma^{(i-1)m+j+d-2})
\end{bmatrix}
\end{align*}
	Since $A_{ij}$ is a submatrix of $A_i$, $\rank(A_{ij}) \leq \rank(A_i) = r_i$. If $r_i<d$, then $A_{ij}$ is not full rank and $\det(A_{ij}) = 0$. However, note that $A_{ij} = H(\gamma^{(i-1)m+j-1})$. In conclusion, if $d-r_i>0$, then $\gamma^{(i-1)m+j-1}$ is a root of $D(X)$.
	
	Extending this argument to multiplicities, let $D^{(\ell)}(X)$ be the $\ell^{th}$ derivative of $D(X)$ for $\ell \in \{0,1,\cdots,d\}$. Then this derivative can be written as a sum of $d^{\ell}$ determinants such that every determinant has at least $d-\ell$ columns common with $H(X)$. This follows by writing out the determinant as a signed sum of monomials, applying the product rule of differentiation, and packing them back into determinants. 
	
	Therefore, $D^{(\ell)}(\gamma^{(i-1)m+j-1})$ can be written as a sum of determinants where each determinant has at least $d-\ell$ columns in common with $A_{ij}$. For $\ell = 0,1,\cdots ,d-r_i-1$, this leaves at least $r_i+1$ columns in each determinant from $A_{ij}$. Recall that $\rank(A_{ij})\leq r_i$, which implies that any set of $r_i+1$ columns in $A_{ij}$ are linearly dependent, causing each of the $d^{\ell}$ determinants in the sum for $H^{(\ell)}(\gamma^{(i-1)m+j-1})$ to vanish. We conclude that $H^{(\ell)}(\gamma^{(i-1)m+j-1}) = 0$ for $\ell = 0,1,\cdots , d-r_i-1$, and so $\gamma^{(i-1)m+j-1}$ is a root of $D(X)$ with multiplicity at least $d-r_i$.
	\end{proof}

Now we use the above global upper bound on rank deficit to prove a list size bound with induction.

\begin{theorem}\label{thm:folded_main}
Let $\frs$ be an $m$-folded Reed-Solomon code of blocklength $N=n/m$ and rate $R$. Suppose $d,k,m$ are integers such that $k>d$ and $m \geq k$. 
Then, for any $g \in (\F_q^m)^N$ and for every affine subspace $\calH \sub \frs$ of dimension $d$,
\[
	\abs{ \calH \cap \calL\inparen{g, \frac{k}{k+1} \cdot \inparen{1-\frac{m}{m-k+1} \cdot R}} } \leq (k-1)\cdot d + 1.
\]
\end{theorem}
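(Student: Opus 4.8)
I would prove this by strong induction on the dimension $d$, keeping the bottom-up counting scheme of \cref{lem:dimd} but extracting more from the folded symbols. The leverage is that forcing a codeword $h=h_0+\sum_{j=1}^{d}\alpha_j h_j$ of $\calH$ to agree with $g$ on the $i$-th folded block is exactly the rank-$r_i$ linear system $A_i\alpha=b_i$ (with $A_i$, $r_i$ as introduced above), so it confines $h$ to an affine subspace of $\calH$ of dimension $d-r_i$ rather than merely $d-1$; the inductive hypothesis then caps how many list elements can share that agreement, and the rank deficits $d-r_i$ are controlled globally by the Guruswami--Kopparty bound $\sum_{i=1}^{N}(d-r_i)\le \frac{dRn}{m-d+1}$.

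For the induction, the base case $d=0$ is immediate since $\calH$ is a single point. For $d\ge1$, set $\rho=\frac{k}{k+1}\bigl(1-\frac{m}{m-k+1}R\bigr)$ (we may assume $\rho>0$, else the list is empty), $\calH_g=\calH\cap\calL(g,\rho)$, and for each folded coordinate $i$ let $S_i\subseteq\calH_g$ be the list elements that agree with $g$ on block $i$, with $T_h=\{i:h\in S_i\}$. If $r_i\ge1$, the set $S_i$ lies in an affine subspace of dimension $d-r_i\le d-1<k$, so by the inductive hypothesis---applied with the \emph{same} $k$, since $\rho$ does not depend on $d$---we get $|S_i|\le(k-1)(d-r_i)+1$. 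If $r_i=0$, then $h_1,\dots,h_d$ all vanish on block $i$, so every codeword of $\calH$ restricts to $h_0$ there; such a block therefore lies in $S_i$ either for all of $\calH_g$ or for none of it, contributes the same amount to every codeword's agreement with $g$, and---since distinct codewords of $\calH$ are at distance $\ge 1-R$ yet coincide on all such blocks---there are at most $RN$ of them.

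I would then double count $\sum_{i}|S_i|=\sum_{h\in\calH_g}|T_h|$. On one side $|T_h|>(1-\rho)N$ for $h\in\calH_g$, and subtracting off the common contribution of the $r_i=0$ blocks lower-bounds $\sum_{i:\,r_i\ge1}|S_i|$; on the other side, summing $|S_i|\le(k-1)(d-r_i)+1$ over $i$ with $r_i\ge1$ and using the Guruswami--Kopparty bound minus the full $d$ that each $r_i=0$ block contributes to $\sum_i(d-r_i)$ gives a matching upper bound. Supposing for contradiction that $|\calH_g|\ge(k-1)d+2$, substituting $\rho$ and $n=mN$, and using that there are at most $RN$ blocks with $r_i=0$, everything collapses to the elementary inequality
\[
	\bigl((k-1)d+2\bigr)(1-\rho)\ \ge\ (k-1)\,\frac{dRm}{m-d+1}+1+R.
\]
Since $1-\rho=\frac{1}{k+1}+\frac{km}{(k+1)(m-k+1)}R$ is affine in $R$, it suffices to check this at $R=0$ (where it reads $(k-1)(d-1)\ge0$) and at $R=\frac{m-k+1}{m}$ (where it simplifies to $(k-1)d+1\ge\frac{(k-1)d(m-k+1)}{m-d+1}+\frac{m-k+1}{m}$, which holds because $k>d$). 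The resulting contradiction completes the induction.

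The step I expect to be the crux is the handling of the coordinates with $r_i=0$: there the folded symbol is uninformative, the dimension does not drop, and a naive induction would stall---the resolution being that every codeword of $\calH$ is forced to equal $h_0$ on those blocks, so they act as a uniform shift on all agreements and, by the code's distance, are few. A second delicate point is that the closing arithmetic is genuinely tight: the factor $\frac{m}{m-k+1}$ in the decoding radius is exactly what is needed to absorb both the Guruswami--Kopparty rank-deficit loss and the loss from the $r_i=0$ blocks, and specializing to $d=k-1$ reproduces the list-size bound $(k-1)^2+1$ of \cref{thm:folded_rs}.
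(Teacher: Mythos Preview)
Your proposal is correct and follows essentially the same approach as the paper's proof: induction on $d$, using the rank $r_i$ of the block-$i$ constraint system to drop dimension by $r_i$ and invoke the inductive hypothesis, applying the Guruswami--Kopparty bound to control $\sum_i(d-r_i)$, and treating the $r_i=0$ blocks separately via the observation that they number at most $RN$. The only cosmetic differences are that the paper takes $d=1$ as a separate base case (its closing algebra needs $d\ge 2$), whereas your endpoint-in-$R$ check handles $d\ge 1$ uniformly; and the paper bounds the $r_i=0$ contribution by the trivial $|\calH_g|$ rather than your equivalent ``uniform shift'' observation.
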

\begin{proof}
We prove this by induction on $d$. The case $d=0$ is trivial, and the case $d=1$ follows by \cref{lem:dim1} and using $\abs{\calH \cap \calL\inparen{g, \frac{k}{k+1} \cdot \inparen{1-\frac{m}{m-k+1} \cdot R}}} \leq \abs{ \calH \cap \calL\inparen{g, \frac{k}{k+1} \cdot  \inparen{1- R}}}$.

Henceforth, let $d\geq 2$, and denote $\calH_g = \calH \cap \calL\inparen{g, \frac{k}{k+1} \cdot \inparen{1-\frac{m}{m-k+1} \cdot R}}$, and $S_h$ be the agreement set between $g$ and $h$ (over all of $[n]$). Using the lower bound on the size of agreement sets,
\[
	\inparen{\frac{1}{k+1}+\frac{kR}{k+1} \cdot \frac{m}{m-k+1}} N |\calH_g| \leq \sum_{h\in \calH_g} |S_h|
\]
An upper bound on $\sum_{h\in \calH_g} |S_h|$ can again be proved using the inductive hypothesis. Again, we will consider two cases depending on $r_i=0$ or $r_i>0$. In the latter case, we can reduce dimension of the affine space $\calH$ by $r_i>0$ when we decide to assume $h_i= g_i$, so that the inductive hypothesis kicks in. Let $B\sub [N]$ be the bad set with $r_i = 0$, and $b = |B|/N$. It is easy to see that $b < R$.

For $i\in B$, we use the trivial bound $|\calH_g|$ on the number of agreement sets $i$ belongs to. For $i\in \bar{B}$, the dimension reduces to $d-r_i$, and so the coordinate $i$ can appear in at most $(k-1)(d-r_i)+1$ many agreement sets.
\begin{align*}
	\sum_{h\in \calH_g} |S_h| &= \sum_{i=1}^N \abs{ \{ h\in \calH_g \suchthat \forall j\in [t], \  h(\gamma^{(i-1)m+j-1}) = g(\gamma^{(i-1)m+j-1}) \}} \\
	&\leq \sum_{i\in \bar{B}} \insquare{(k-1)(d-r_i) + 1} + \sum_{i\in B}|\calH_g|\\
	&= N-|B|+(k-1)\sum_{i\not\in B} \insquare{d-r_i} + |B|\cdot |\calH_g|\\
	&\leq |B|\cdot |\calH_g| +N-|B| + (k-1)\inparen{\frac{d\cdot Rn}{m-d+1} - d|B|} \\
	&\leq |B|\cdot |\calH_g| + N\inparen{1-b+(k-1)d\inparen{\frac{m}{m-d+1}R - b}}
\end{align*}

Comparing the lower bound and upper bound,
\begin{align*}
	|\calH_g| &\leq \frac{1-b+(k-1)d \inparen{\frac{m}{m-d+1}R-b}}{\inparen{\frac{1}{k+1}+\frac{kR}{k+1} \cdot \frac{m}{m-k+1}-b}} \\
	&< \frac{1-b+(k-1)d \inparen{\frac{m}{m-k+1}R-b}}{\inparen{\frac{1}{k+1}+\frac{kR}{k+1} \cdot \frac{m}{m-k+1}-b}} && [\text{Using }d<k]
\end{align*}
We show that $|\calH_g| < 1+(k-1)d$ by showing that 
\[
\inparen{\frac{1}{k+1}+\frac{kR}{k+1} \cdot \frac{m}{m-k+1}-b} \inparen{|\calH_g| - 1 - (k-1)d}<0
\]
This suffices to conclude our induction.
\begin{align*}
&\inparen{\frac{1}{k+1}+\frac{kR}{k+1} \cdot \frac{m}{m-k+1}-b} \inparen{|\calH_g| - 1 - (k-1)d}\\
&< 1+ \frac{m}{m-k+1} (k-1)dR - \frac{1}{k+1} - \frac{kR}{k+1} \cdot \frac{m}{m-k+1} - \frac{(k-1)d}{k+1} - \frac{kR}{k+1} \cdot \frac{m}{m-k+1}\cdot (k-1)d \\
&= \frac{k}{k+1}\left(1-\frac{m}{m-k+1}R\right) + \frac{m}{m-k+1} (k-1)dR - \frac{(k-1)d}{k+1} - \frac{kR}{k+1} \cdot \frac{m}{m-k+1}\cdot (k-1)d \\
&= \frac{k}{k+1}\left(1-\frac{m}{m-k+1}R\right) - \frac{(k-1)d}{k+1} + \frac{R}{k+1} \cdot \frac{m}{m-k+1}\cdot (k-1)d \\
&=  \frac{k}{k+1}\left(1-\frac{m}{m-k+1}R\right) - \frac{(k-1)d}{k+1}\inparen{1 - \frac{m}{m-k+1}R}\\
&=  \left( \frac{k-(k-1)d}{k+1} \right) \cdot \inparen{1-\frac{m}{m-k+1}R}
\end{align*}
The last term is $\leq 0$ as long as $k\leq (k-1)d$, which is always true for $d\geq 2$.
\end{proof}

This immediately leads to the following corollary.
\begin{corollary}
	Let $\frs$ be an $m$-folded Reed-Solomon code of blocklength $N=n/m$ and rate $R$. For any integer $k$, $1\leq k\leq m$, and for any $g\in (\F_q^m)^N$, it holds that
	\[
		\left| \calL\inparen{g,\frac{k}{k+1}\inparen{1-\frac{m}{m-k+1}R}} \right| \leq (k-1)^2+1
	\]
\end{corollary}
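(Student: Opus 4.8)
The plan is to combine the two main ingredients already assembled in the excerpt: the linear-algebraic result of Guruswami (\cref{thm:lin_alg_rs}), which places the list inside an affine subspace $\calH \sub \frs$ of dimension $k-1$, and the inductive bound on intersections of such subspaces with Hamming balls for folded RS codes (\cref{thm:folded_main}). Concretely, \cref{thm:lin_alg_rs} gives an affine subspace $\calH$ of dimension $d = k-1$ with
\[
	\calL\inparen{g,\tfrac{k}{k+1}\inparen{1-\tfrac{m}{m-k+1}R}} \sub \efrs(\calH),
\]
so the list size equals $\abs{\calH \cap \calL\inparen{g,\tfrac{k}{k+1}\inparen{1-\tfrac{m}{m-k+1}R}}}$ (identifying $\calH$ with its image in $\frs$).

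Next, I would check that \cref{thm:folded_main} applies with this choice of parameters. That theorem requires integers $d, k', m$ with $k' > d$ and $m \ge k'$; taking $d = k-1$ and $k' = k$ satisfies $k > k-1$ and $m \ge k$ (the latter is the hypothesis $k \in [m]$ of the corollary). Plugging in, \cref{thm:folded_main} yields
\[
	\abs{\calH \cap \calL\inparen{g,\tfrac{k}{k+1}\inparen{1-\tfrac{m}{m-k+1}R}}} \le (k-1)\cdot d + 1 = (k-1)(k-1)+1 = (k-1)^2 + 1,
\]
which is exactly the claimed bound. The case $k=1$ should be handled separately (or noted as trivial unique decoding), since then $d=0$ and the bound reads $1$, consistent with \cref{thm:folded_main}'s trivial $d=0$ case.

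Since all the real work has been done in \cref{thm:lin_alg_rs} and \cref{thm:folded_main}, there is essentially no obstacle here — the corollary is a direct substitution. The one point requiring a small amount of care is the bookkeeping between the polynomial-space description of $\calH$ (as an affine subspace of $\F_q[X]^{<Rn}$) and its image $\efrs(\calH)$ as a linear subspace of $\frs$; the remark following \cref{thm:lin_alg_rs} already records that $\efrs$ maps a $(k-1)$-dimensional affine subspace to a $(k-1)$-dimensional linear subspace injectively, so the list sizes match and \cref{thm:folded_main} (which is stated for affine subspaces $\calH \sub \frs$) can be invoked directly on $\efrs(\calH)$. I would write the proof as a three-line display chaining these inequalities, mirroring the proof of the $k=2$ corollary given earlier.
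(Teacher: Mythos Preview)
Your proposal is correct and matches the paper's own proof essentially line for line: invoke \cref{thm:lin_alg_rs} to obtain the $(k-1)$-dimensional affine subspace containing the list, then apply \cref{thm:folded_main} with $d=k-1$ to conclude $(k-1)^2+1$. Your extra remarks on verifying the hypotheses $k>d$, $m\ge k$ and on the $\efrs$ identification are fine but not strictly needed in the paper's version.
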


\begin{proof}
	We use \cref{thm:lin_alg_rs} to claim that for $m$-folded RS codes, the list $\calL \inparen{g,\frac{k}{k+1} \cdot \inparen{1-\frac{m}{m-k+1}R}}$ is contained in an affine subspace $\calH \sub \frs$ of dimension $k-1$. Therefore,
	\begin{align*}
		\abs{\calL \inparen{g,\frac{k}{k+1} \cdot \inparen{1-\frac{m}{m-k+1}R}}} &\leq \abs{\calH \cap \calL \inparen{g,\frac{k}{k+1} \cdot \inparen{1-\frac{m}{m-k+1}R}}} \\
		&\leq (k-1)\cdot(k-1)+1 && [\text{\cref{thm:folded_main}}]\\
		&= (k-1)^2+1
	\end{align*}
\end{proof}

\section*{Acknowledgements}
We thank Fernando Granha Jeronimo, Tushant Mittal and Madhur Tulsiani for many helpful discussions. We also thank Prahladh Harsha and Swastik Kopparty, who suggested the problem of deterministic near-linear time algorithm for list decoding of FRS codes during the Simons Institute workshop on Advances in the Theory of Error-Correcting Codes.

\bibliographystyle{alphaurl}
\bibliography{macros,madhur}

\newcommand{\etalchar}[1]{$^{#1}$}
\begin{thebibliography}{KRZSW23}

\bibitem[CZ24]{CZ24}
Yeyuan Chen and Zihan Zhang.
\newblock {Explicit Folded Reed-Solomon and Multiplicity Codes Achieve Relaxed
  Generalized Singleton Bound}, 2024.
\newblock URL: \url{https://arxiv.org/abs/2408.15925}, \href
  {https://arxiv.org/abs/2408.15925} {\path{arXiv:2408.15925}}.

\bibitem[DL12]{DL12}
Zeev Dvir and Shachar Lovett.
\newblock Subspace evasive sets.
\newblock In {\em Proceedings of the Forty-Fourth Annual ACM Symposium on
  Theory of Computing}, STOC '12, page 351–358, New York, NY, USA, 2012.
  Association for Computing Machinery.
\newblock \href {https://arxiv.org/abs/1110.5696} {\path{arXiv:1110.5696}},
  \href {https://doi.org/10.1145/2213977.2214010}
  {\path{doi:10.1145/2213977.2214010}}.

\bibitem[GHKS24]{GHKS24}
Rohan Goyal, Prahladh Harsha, Mrinal Kumar, and Ashutosh Shankar.
\newblock Fast list-decoding of univariate multiplicity and folded
  {Reed-Solomon} codes.
\newblock {\em arXiv preprint arXiv:2311.17841}, 2024.
\newblock \href {https://arxiv.org/abs/2311.17841} {\path{arXiv:2311.17841}}.

\bibitem[GK16]{GK16}
Venkatesan Guruswami and Swastik Kopparty.
\newblock Explicit subspace designs.
\newblock {\em Combinatorica}, 36(2):161--185, April 2016.
\newblock \href {https://doi.org/10.1007/s00493-014-3169-1}
  {\path{doi:10.1007/s00493-014-3169-1}}.

\bibitem[GR08]{GR08}
Venkatesan Guruswami and Atri Rudra.
\newblock Explicit codes achieving list decoding capacity: Error-correction
  with optimal redundancy.
\newblock {\em IEEE Transactions on Information Theory}, 54(1):135--150, 2008.
\newblock \href {https://doi.org/10.1109/TIT.2007.911222}
  {\path{doi:10.1109/TIT.2007.911222}}.

\bibitem[GRZ21]{GRZ21}
Zeyu Guo and Noga Ron-Zewi.
\newblock Efficient list-decoding with constant alphabet and list sizes.
\newblock In {\em Proceedings of the 53rd Annual ACM SIGACT Symposium on Theory
  of Computing}, STOC 2021, page 1502–1515, New York, NY, USA, 2021.
  Association for Computing Machinery.
\newblock \href {https://arxiv.org/abs/2011.05884} {\path{arXiv:2011.05884}},
  \href {https://doi.org/10.1145/3406325.3451046}
  {\path{doi:10.1145/3406325.3451046}}.

\bibitem[Gur09]{Gur09}
Venkatesan Guruswami.
\newblock Artin automorphisms, cyclotomic function fields, and folded
  list-decodable codes.
\newblock In {\em Proceedings of the Forty-First Annual ACM Symposium on Theory
  of Computing}, STOC '09, page 23–32, New York, NY, USA, 2009. Association
  for Computing Machinery.
\newblock \href {https://arxiv.org/abs/0811.4139} {\path{arXiv:0811.4139}},
  \href {https://doi.org/10.1145/1536414.1536420}
  {\path{doi:10.1145/1536414.1536420}}.

\bibitem[Gur11]{Gur11}
Venkatesan Guruswami.
\newblock Linear-algebraic list decoding of folded {Reed-Solomon} codes.
\newblock In {\em Proceedings of the 2011 IEEE 26th Annual Conference on
  Computational Complexity}, CCC '11, page 77–85, USA, 2011. IEEE Computer
  Society.
\newblock \href {https://arxiv.org/abs/1106.0436} {\path{arXiv:1106.0436}}.

\bibitem[GX12]{GX12}
Venkatesan Guruswami and Chaoping Xing.
\newblock Folded codes from function field towers and improved optimal rate
  list decoding.
\newblock In {\em Proceedings of the Forty-Fourth Annual ACM Symposium on
  Theory of Computing}, STOC '12, page 339–350, New York, NY, USA, 2012.
  Association for Computing Machinery.
\newblock \href {https://arxiv.org/abs/1204.4209} {\path{arXiv:1204.4209}},
  \href {https://doi.org/10.1145/2213977.2214009}
  {\path{doi:10.1145/2213977.2214009}}.

\bibitem[GX13]{GX13}
Venkatesan Guruswami and Chaoping Xing.
\newblock {List Decoding Reed-Solomon, Algebraic-Geometric, and Gabidulin
  Subcodes up to the Singleton Bound}.
\newblock In {\em Proceedings of the Forty-Fifth Annual ACM Symposium on Theory
  of Computing}, STOC '13, page 843–852, New York, NY, USA, 2013. Association
  for Computing Machinery.
\newblock \href {https://doi.org/10.1145/2488608.2488715}
  {\path{doi:10.1145/2488608.2488715}}.

\bibitem[GX22]{GX22}
Venkatesan Guruswami and Chaoping Xing.
\newblock Optimal rate list decoding over bounded alphabets using
  algebraic-geometric codes.
\newblock {\em J. ACM}, 69(2), Jan 2022.
\newblock \href {https://arxiv.org/abs/1708.01070} {\path{arXiv:1708.01070}},
  \href {https://doi.org/10.1145/3506668} {\path{doi:10.1145/3506668}}.

\bibitem[HRZW20]{HRW20}
Brett Hemenway, Noga Ron-Zewi, and Mary Wootters.
\newblock Local list recovery of high-rate tensor codes and applications.
\newblock {\em SIAM Journal on Computing}, 49(4):FOCS17--157--FOCS17--195,
  2020.
\newblock \href {https://arxiv.org/abs/1706.03383} {\path{arXiv:1706.03383}},
  \href {https://doi.org/10.1137/17M116149X} {\path{doi:10.1137/17M116149X}}.

\bibitem[KRRZ{\etalchar{+}}21]{KRRSS21}
Swastik Kopparty, Nicolas Resch, Noga Ron-Zewi, Shubhangi Saraf, and Shashwat
  Silas.
\newblock On list recovery of high-rate tensor codes.
\newblock {\em IEEE Transactions on Information Theory}, 67(1):296--316, 2021.
\newblock \href {https://doi.org/10.1109/TIT.2020.3023962}
  {\path{doi:10.1109/TIT.2020.3023962}}.

\bibitem[KRZSW23]{KRSW23}
Swastik Kopparty, Noga Ron-Zewi, Shubhangi Saraf, and Mary Wootters.
\newblock {Improved List Decoding of Folded Reed-Solomon and Multiplicity
  Codes}.
\newblock {\em SIAM Journal on Computing}, 52(3):794--840, 2023.
\newblock \href {https://arxiv.org/abs/1805.01498} {\path{arXiv:1805.01498}},
  \href {https://doi.org/10.1137/20M1370215} {\path{doi:10.1137/20M1370215}}.

\bibitem[PV05]{PV05}
Farzad Parvaresh and Alexander Vardy.
\newblock {Correcting errors beyond the Guruswami-Sudan radius in polynomial
  time}.
\newblock In {\em 46th Annual IEEE Symposium on Foundations of Computer Science
  (FOCS'05)}, pages 285--294, 2005.
\newblock \href {https://doi.org/10.1109/SFCS.2005.29}
  {\path{doi:10.1109/SFCS.2005.29}}.

\bibitem[Sri24]{Sri24}
Shashank Srivastava.
\newblock {\em Continuous Optimization for Decoding Errors}.
\newblock PhD thesis, Toyota Technological Institute at Chicago, 2024.
\newblock URL: \url{https://arxiv.org/abs/2408.14652}, \href
  {https://arxiv.org/abs/2408.14652} {\path{arXiv:2408.14652}}.

\bibitem[Tam24]{Tamo24}
Itzhak Tamo.
\newblock {Tighter list-size bounds for list-decoding and recovery of folded
  Reed-Solomon and multiplicity codes}.
\newblock {\em IEEE Transactions on Information Theory}, pages 1--1, 2024.
\newblock \href {https://arxiv.org/abs/2312.17097} {\path{arXiv:2312.17097}},
  \href {https://doi.org/10.1109/TIT.2024.3402171}
  {\path{doi:10.1109/TIT.2024.3402171}}.

\bibitem[Vad12]{Vadhan12}
Salil~P. Vadhan.
\newblock Pseudorandomness.
\newblock {\em Foundations and Trends® in Theoretical Computer Science},
  7(1–3):1--336, 2012.
\newblock \href {https://doi.org/10.1561/0400000010}
  {\path{doi:10.1561/0400000010}}.

\end{thebibliography}

\end{document}